\newcommand{\ii}{\mathrm{i}}
\newcommand{\cH}{\mathcal{H}}
\newcommand{\ud}{\mathrm{d}}
 \newtheorem{thm}{Theorem}[section]
 \newtheorem{cor}[thm]{Corollary}
 \newtheorem{prop}[thm]{Proposition}
 \theoremstyle{definition}
 \theoremstyle{remark}
 \numberwithin{equation}{section}
\begin{document}\hyphenation{Cou-lomb}

%-------------------------------------------------------------------------
% editorial commands: to be inserted by the editorial office
%
%\firstpage{1} \volume{228} \Copyrightyear{2004} \DOI{003-0001}
%
%
%\seriesextra{Just an add-on}
%\seriesextraline{This is the Concrete Title of this Book\br H.E. R and S.T.C. W, Eds.}
%
% for journals:
%
%\firstpage{1}
%\issuenumber{1}
%\Volumeandyear{1 (2004)}
%\Copyrightyear{2004}
%\DOI{003-xxxx-y}
%\Signet
%\commby{inhouse}
%\submitted{March 14, 2003}
%\received{March 16, 2000}
%\revised{June 1, 2000}
%
\title{Discrete spectra for critical Dirac-Coulomb Hamiltonians}% Force line breaks with \\
%\accepted{July 22, 2000}
%
%
%
%---------------------------------------------------------------------------
%Insert here the title, affiliations and abstract:
%

%----------Author 1
\author{Matteo Gallone\footnote{International School for Advanced Studies -- SISSA, via Bonomea 265, 34136 Trieste, Italy. e-mail 
\texttt{mgallone@sissa.it}} $\,$ and Alessandro Michelangeli\footnote{International School for Advanced Studies -- SISSA, via Bonomea 265 34136 Trieste, Italy.
e-mail 	\texttt{alemiche@sissa.it}}}

%----------classification, keywords, date

% 
% 47B25
% Symmetric and selfadjoint operators (unbounded)
% 
% 47N20
% Applications to differential and integral equations
% 
% 47N50
% Applications in the physical sciences
% 
% 81Q10
% Selfadjoint operator theory in quantum theory, including spectral
% analysis

\date{\today}
%----------additions
%\dedicatory{To my boss}
%%% ----------------------------------------------------------------------
\maketitle

\begin{abstract}
The one-particle Dirac Hamiltonian with Coulomb interaction is known to be realised, in a regime of large (critical) couplings, by an infinite multiplicity of distinct self-adjoint operators, including a distinguished, physically most natural one. For the latter, Sommerfeld's celebrated fine structure formula provides the well-known expression for the eigenvalues in the gap of the continuum spectrum. Exploiting our recent general classification of all other self-adjoint realisations, we generalise Sommerfeld's formula so as to determine the discrete spectrum of all other self-adjoint versions of the Dirac-Coulomb Hamiltonian. Such discrete spectra display naturally a fibred structure, whose bundle covers the whole gap of the continuum spectrum.
% An article usually includes an abstract, a concise summary of the work
% covered at length in the main body of the article. It is used for
% secondary publications and for information retrieval purposes. 
% %
% Valid PACS numbers may be entered using the \verb+\pacs{#1}+ command.

\textbf{PACS.}{02.30.Gp, 02.30.Hq, 02.30.Sa, 02.30.Tb, 03.65.Pm, 32.30.-r }

\textbf{Keywords.} Dirac-Coulomb operator, self-adjoint extension theories, confluent hypergeometric equation, supersymmetric Quantum Mechanics %Use showkeys class option if keyword
\end{abstract}

% PACS, the Physics and Astronomy
                             % Classification Scheme.

% 02.30.Gp 	Special functions
% 02.30.Hq 	Ordinary differential equations
% 02.30.Sa 	Functional analysis
% 02.30.Tb 	Operator theory
% 03.65.Pm        Dirac equation
% 32.30.-r 	Atomic spectra 

                              %display desired
\maketitle

% \begin{quotation}
% The ``lead paragraph'' is encapsulated with the \LaTeX\ 
% \verb+quotation+ environment and is formatted as a single paragraph before the first section heading. 
% (The \verb+quotation+ environment reverts to its usual meaning after the first sectioning command.) 
% Note that numbered references are allowed in the lead paragraph.
% %
% The lead paragraph will only be found in an article being prepared for the journal \textit{Chaos}.
% \end{quotation}

\section{\label{sec:Intro} Dirac-Coulomb Hamiltonians and spectrum: main results}

We study the discrete spectrum of the so-called Dirac-Coulomb Hamiltonian for a relativistic spin-$\frac{1}{2}$ particle of mass $m$ and charge $-e<0$, moving in $\mathbb{R}^3$, and subject to the external scalar field due to the Coulomb interaction with a nucleus of atomic number $Z$ placed in the origin, that is, the operator
\begin{equation}\label{eq:Hformal}
 H\;:=\;-\ii c \hbar \,\bm{\alpha}\cdot \bm{\nabla}+\beta m c^2-\frac{\,cZ\alpha_\mathrm{f}\,}{|x|}\mathbbm{1}
\end{equation}
acting on the Hilbert space 
\begin{equation}
 \cH\;:=\;L^2(\mathbb{R}^3)\otimes\mathbb{C}^4\;\cong\;L^2(\mathbb{R}^3,\mathbb{C}^4,\ud x)\,,
\end{equation}
where $\hbar$ is Planck's constant, $c$ is the speed of light,
\begin{equation}
 \alpha_\mathrm{f}\;=\;\frac{e^2}{\hbar c}\;\approx\;\frac{1}{137}
\end{equation}
is the fine-structure constant, and $\bm{\alpha}\equiv(\alpha_1,\alpha_2,\alpha_3)$ and $\beta$ are the $4\times 4$ matrices
\begin{equation}
 \beta\;=\;\begin{pmatrix} 
            \mathbbm{1} & \mathbbm{O} \\
            \mathbbm{O} & -\mathbbm{1}
           \end{pmatrix}\,,\qquad
 \alpha_j\;=\;\begin{pmatrix}
               \mathbbm{O} & \sigma_j \\
               \sigma_j & \mathbbm{O}
              \end{pmatrix}\,,\qquad j\in\{1,2,3\}\,,
\end{equation}
having denoted by $\mathbbm{1}$ and $\mathbbm{O}$, respectively, the identity and the zero $2\times 2$ matrix, and by $\sigma_j$ the Pauli matrices
\begin{equation}
 \sigma_1\;=\;\begin{pmatrix}
               0 & 1 \\ 1 & 0
              \end{pmatrix}\,,\qquad
 \sigma_2\;=\;\begin{pmatrix}
               0 & -\ii \\ \ii & 0
              \end{pmatrix}\,,\qquad
 \sigma_3\;=\;\begin{pmatrix}
               1 & 0 \\ 0 & -1
              \end{pmatrix}\,.
\end{equation}

As well known\cite{Thaller-Dirac-1992}, if one initially defines $H$ on the natural domain $C^\infty_0(\mathbb{R}^3\!\setminus\!\{0\},\mathbb{C}^4)$, then $H$ has a unique self-adjoint realisation only when $Z\alpha_\mathrm{f}\leqslant\frac{\sqrt{3}}{\,2}$ (i.e., $Z\leqslant 118$, the `\emph{sub-critical}' regime), an infinite multiplicity of self-adjoint extensions arising for larger $Z$.

Let us set for convenience $\nu\equiv -Z\alpha_\mathrm{f}$ and adopt natural units $c=\hbar=m=e=1$. It is standard to exploit the symmetries of $H$ by passing to polar coordinates $x\equiv(r,\Omega)\in\mathbb{R}^+\!\times\mathbb{S}^2$, $r:=|x|$, for $x\in\mathbb{R}^3$, which induces the isomorphism
\begin{equation}
 L^2(\mathbb{R}^3,\mathbb{C}^4,\ud x)\;\cong\;L^2(\mathbb{R}^+,\ud r)\otimes L^2(\mathbb{S}^2,\mathbb{C}^4,\ud\Omega)\,,
\end{equation}
and then further decomposing
\begin{equation}
 L^2(\mathbb{S}^2,\mathbb{C}^4,\ud\Omega)\;\cong\;\bigoplus_{j\in\frac{1}{2}+\mathbb{N}}\;\;\;\bigoplus_{m_j=-j}^j\;\bigoplus_{\kappa_j=\pm(j+\frac{1}{2})}\mathcal{K}_{m_j,\kappa_j}
\end{equation}
in terms of the observables
\[
\begin{split}
 \bm{L}&=\bm{x}\times(-\ii\bm{\nabla})\,,\qquad\qquad\quad\;\bm{S}=-{\textstyle\frac{\ii}{4}}\,\bm{\alpha}\times\bm{\alpha}\,, \\
 \bm{J}&=\bm{L}+\bm{S}\equiv(J_1,J_2,J_3)\,,\quad K=\beta(2\bm{L}\cdot\bm{S}+\mathbbm{1})\,,
\end{split}
\]
where 
\begin{equation}
 \mathcal{K}_{m_j,\kappa_j}:=\;\mathrm{span}\{\Psi^+_{m_j,\kappa_j},\Psi^-_{m_j,\kappa_j}\}\;\cong\;\mathbb{C}^2
\end{equation}
and $\Psi^+_{m_j,\kappa_j}$ and $\Psi^-_{m_j,\kappa_j}$ are two orthonormal vectors in $\mathbb{C}^4$, and simultaneous eigenvectors of the observables $J^2\!\upharpoonright\! L^2(\mathbb{S}^2,\mathbb{C}^4,\ud\Omega)$, $J_3\!\upharpoonright\! L^2(\mathbb{S}^2,\mathbb{C}^4,\ud\Omega)$, and $K\!\upharpoonright\! L^2(\mathbb{S}^2,\mathbb{C}^4,\ud\Omega)$  with eigenvalue, respectively, $j(j+1)$, $m_j$, and $\kappa_j$. Each subspace
\begin{equation}\label{eq:def_space_H_mj_kj}
 \cH_{m_j,\kappa_j}\;:=\;L^2(\mathbb{R}^+,\ud r)\otimes\mathcal{K}_{m_j,\kappa_j}\;\cong\;L^2(\mathbb{R}^+,\mathbb{C}^2,\ud r)
\end{equation}
of $\cH$ is then a reducing subspace for $H$, which, through the overall isomorphism
\begin{equation}
U\;:\;L^2(\mathbb{R}^3,\mathbb{C}^4,\ud x)\;\xrightarrow[]{\cong}\;\bigoplus_{j\in\frac{1}{2}+\mathbb{N}}\;\;\;\bigoplus_{m_j=-j}^j\;\bigoplus_{\kappa_j=\pm(j+\frac{1}{2})}\cH_{m_j,\kappa_j}\,,
\end{equation}
is therefore unitarily equivalent to
\begin{equation}\label{eq:Dirac_operator_decomposition}
 UHU^*\;=\;\bigoplus_{j\in\frac{1}{2}+\mathbb{N}}\;\;\;\bigoplus_{m_j=-j}^j\;\bigoplus_{\kappa_j=\pm(j+\frac{1}{2})}\;h_{m_j,\kappa_j}\,,
\end{equation}
where
\begin{equation}\label{eq:def_operator_h_mj_kj}
\begin{split}
  h_{m_j,\kappa_j}\;&:=\;\begin{pmatrix}
                   1+\frac{\nu}{r} & -\frac{\ud}{\ud r}+\frac{\kappa_j}{r} \\
                   \frac{\ud}{\ud r}+\frac{\kappa_j}{r} & -1+\frac{\nu}{r}
                  \end{pmatrix}, \\
 \qquad\mathcal{D}(h_{m_j,\kappa_j})\;&:=\;C^\infty_0(\mathbb{R}^+)\otimes \mathcal{K}_{m_j,\kappa_j}\;\cong\;C^\infty_0(\mathbb{R}^+,\mathbb{C}^2)\,.
\end{split}
\end{equation}

By standard limit-point limit-circle arguments (see, e.g., Ref.~\cite[Chapter 6.B]{Weidmann-book1987}, and for details on the proof also Ref.~\cite[Section 2]{Gallone-AQM2017}), one sees that the operator $h_{m_j,\kappa_j}$ is essentially self-adjoint in the Hilbert space $\cH_{m_j,\kappa_j}$ if and only if
 \begin{equation}
  \nu^2\;\leqslant\;\kappa_j^2-\textstyle{\frac{1}{4}}\,,
 \end{equation}
 and it has deficiency indices $(1,1)$ otherwise. Thus, the operator $h_{\frac{1}{2},1}\oplus h_{\frac{1}{2},-1}\oplus h_{-\frac{1}{2},1}\oplus h_{-\frac{1}{2},-1}$, and hence $H$ itself, has deficiency indices $(4,4)$, and therefore a 16-real-parameter family of self-adjoint extensions.

Among the four relevant blocks the two ones with $k=1$ are identical, and so are the two ones with $k=-1$. The operator-theoretic analysis  of the self-adjoint extensions is completely analogous for each of the two possible signs of $k$. Moreover, for completeness, we include the treatment of both the electron and the corresponding positron, thus allowing the parameter $\nu$ to attain both positive and negative values for each of the two admissible values of $k$.

% With the above conventions in mind, let us focus henceforth on the block $h\equiv h_{\frac{1}{2},1}$. In the sub-critical regime $|\nu|\in(0,\frac{\sqrt{3}}{2}]$ the operator closure $\overline{h}$ is self-adjoint and is a very well studied Hamiltonian (the Dirac-Coulomb Hamiltonian for atoms with $Z\leqslant 118$) since the early times of quantum mechanics\cite{Thaller-Dirac-1992}. In particular,

%Let us then focus on the block $h\equiv h_{\frac{1}{2},1}$. 
In the sub-critical regime $|\nu|\in(0,\frac{\sqrt{3}}{2}]$ the operator closure $\overline{h}$, where $h$ denotes for a moment any of the four operators $h_{\pm\frac{1}{2},\pm 1}$, is self-adjoint and is a very well studied Hamiltonian (the Dirac-Coulomb Hamiltonian for atoms with $Z\leqslant 118$) since the early times of quantum mechanics\cite{Thaller-Dirac-1992}. In particular,
\begin{equation}
\begin{split}
 \sigma_{\mathrm{ess}}(h)\;&=\;(-\infty,-1]\cup[1,+\infty) \\
 \sigma_{\mathrm{disc}}(h)\;&=\;\{E_n\,|\,n\in \mathbb{N}_0\}\,.
\end{split}
\end{equation}
The eigenvalues $E_n$'s are given by Sommerfeld's celebrated fine-structure formula: for example, in the concrete case $\nu<0$, 
\begin{equation}\label{eq:Sommerfeld_formula}
 E_n\;=\;\Big(1+\frac{\nu^2}{(n+\sqrt{1-\nu^2})^2}\Big)^{\!-1/2}, \qquad\quad \nu <0
\end{equation}
(the general case is reported in formula \eqref{eq:EVEnk1} below).

It will be instructive in the following (Sec.~\ref{sec:Sommerfeld_formula}) to revisit the classical methods by which Sommerfeld's formula was derived. It is also worth noticing that in the non-relativistic limit $E_n$ reproduces the $(n+1)$-th energy level of the Schr\"{o}dinger-Coulomb problem: this is seen by reinstating for a moment physical units and constants, and computing
\[
 E_n-m c^2\;=\;m c^2\Big(\Big(1+\frac{\nu^2/c^2}{(n+\sqrt{1-\nu^2/c^2})^2}\Big)^{\!-1/2}-1\Big)\;\xrightarrow[]{\:c\to +\infty\:}\;-\frac{m\nu^2}{2(n+1)^2}\,.
\]

Evidently, Sommerfeld's formula \eqref{eq:Sommerfeld_formula} still yields \emph{real} eigenvalues for the \emph{larger} range $|\nu|\in(0,1)$ and only produces \emph{complex} (non-real) numbers when $|\nu|>1$. This has been since ever generically interpreted as the signature of the fact that when $|\nu|>1$, and hence $Z>137$, it is not possible any longer to make sense of $H$ as a Hamiltonian with bound states, thus obtaining an unstable model (the `$Z=137$ catastrophe').

Therefore, even beyond the regime of coupling $\nu$ in which $H$ is unambiguously defined as a self-adjoint operator, the remaining range $|\nu|\in(\frac{\sqrt{3}}{2},1)$ is of relevance because of the meaningfulness of formula \eqref{eq:Sommerfeld_formula} for bound states: this regime is usually referred to as the `\emph{critical regime}' and corresponds to ultra-heavy nuclei with atomic number $118\leqslant Z\leqslant 137$, possibly nuclei of elements whose discovery is expected in the near future (the last one to be discovered, the Oganesson ${}^{294}_{118}$Og, thus $Z=118$, was first synthesized in 2002 and formally named in 2016).

In fact, starting from the 1970's, and until present days, an intensive investigation has been carried on to identify and study a `\emph{distinguished}' realisation $H_D$ of $H$ in the critical regime, qualified by being the unique realisation whose domain is both contained in the form domain of the kinetic energy and in the form domain of the potential energy.\cite{Evans-1970,Weidmann-1971,Schmincke-1972-distinguished,Wust-1975,Nenciu-1976,Wust-1977,Klaus-Wust-1978,Landgren-Rejto-JMP1979,Arai-Yamada-RIMS-1983,Kato-1983,Xia-1999,Esteban-Loss-JMP2007,Voronov-Gitman-Tyutin-TMP2007,Arrizabalaga-JMP2011,Arrizabalaga-Duoandikoetxea-Vega_2012_JMP2013,Hogreve-2013_JPhysA,Esteban-Lewin-Sere-2017_DC-minmax-levels} As we shall re-derive later, formula \eqref{eq:Sommerfeld_formula} in the critical regime is nothing but the formula for the eigenvalue of such a distinguished extension, more precisely for the corresponding distinguished extension $h_D$ of $h$.

Much less investigated is instead the remaining family of self-adjoint extensions of $h$ and of their spectra.\cite{Voronov-Gitman-Tyutin-TMP2007,Hogreve-2013_JPhysA,MG_DiracCoulomb2017} Recently, in Ref.~\cite{MG_DiracCoulomb2017}, we produced a novel classification of the whole family of extensions of $h$ based on the so-called Kre{\u\i}n-Vi\v{s}ik-Birman\cite{GMO-KVB2017} and Grubb\cite{Grubb-1968}
%,Grubb-DistributionsAndOperators-2009} 
extension theory, as opposite to the previous classifications\cite{Voronov-Gitman-Tyutin-TMP2007,Hogreve-2013_JPhysA,Cassano-Pizzichillo-2017} based on the classical von Neumann theory. In this respect, Ref.~\cite{Cassano-Pizzichillo-2017} deals also with generic potentials $V(x)$ with local Coulomb singularity $|x|^{-1}$.

Let us briefly summarise our previous findings (see Ref.~\cite[Sec.~2]{MG_DiracCoulomb2017}).%, that for concreteness of the presentation we report for the sector $k=1$. 
We shall work in the critical regime $|\nu|\in\textstyle{(\frac{\sqrt{3}}{2},1)}$, whence %%%QUI TOLTO IL MODULO A NU
\begin{equation}\label{eq:Bnu}
 B\;:=\;\sqrt{1-\nu^2}\;\in\;\textstyle{(0,\frac{1}{2})}\,.
\end{equation}
We introduce the differential operator
%\begin{equation}\label{eq:diff_op_h}
% \widetilde{h}\;:=\;\begin{pmatrix}
%                   1+\frac{\nu}{r} & -\frac{\ud}{\ud r}+\frac{1}{r} \\
%                   \frac{\ud}{\ud r}+\frac{1}{r} & -1+\frac{\nu}{r}
%                  \end{pmatrix}
%\end{equation}
\begin{equation}\label{eq:diff_op_h}
 \widetilde{h}\;:=\;\begin{pmatrix}
                   1+\frac{\nu}{r} & -\frac{\ud}{\ud r}+\frac{k}{r} \\
                   \frac{\ud}{\ud r}+\frac{k}{r} & -1+\frac{\nu}{r}
                  \end{pmatrix}
\end{equation}
on `\emph{spinor}' functions of the form $f(x)\equiv\begin{pmatrix} f^+(x) \\ f^-(x)\end{pmatrix}$. The densely defined and symmetric operator on the Hilbert space $L^2(\mathbb{R}^+,\mathbb{C}^2)$ defined by
\begin{equation}\label{eq:def_h}
\mathcal{D}(h)\;:=\;C^\infty_0(\mathbb{R}^+,\mathbb{C}^2)\,,\qquad hf\;:=\;\widetilde{h}f
\end{equation}
has adjoint given by
\begin{equation}
\mathcal{D}(h^*)\;=\;\{\psi\in L^2(\mathbb{R}^+,\mathbb{C}^2)\,|\,\widetilde{h}\,\psi \in L^2(\mathbb{R}^+,\mathbb{C}^2)\}\,\qquad h^*\psi\;=\;\widetilde{h}\,\psi\,.
\end{equation}
One has
\begin{equation}\label{eq:kernelSstar}
\begin{split}
 \ker S^*\;&=\;\mathrm{span}\{\Phi\} \\
 \Phi^{\pm}(r)\;&:=\;e^{-r}r^{-B}\big( \textstyle{\frac{\pm(k+\nu)+B}{k+\nu}}\,U_{-B,1-2B}(2r)-\textstyle{\frac{2rB}{k+\nu}}\,U_{1-B,2-2B}(2r)\big)\,,
\end{split}
\end{equation}
where $U_{a,b}(r)$ is the Tricomi function (see Ref.~\cite[Sec.~13.1.3]{Abramowitz-Stegun-1964}). $\Phi$ is analytic on $(0,+\infty)$ with asymptotics
\begin{equation}\label{eq:Phi_asymptotics}
 \begin{split}
  \Phi(r)\;&=\;r^{-B}\,\textstyle{\frac{\Gamma(2B)}{\Gamma(B)}}
  \begin{pmatrix}
   \;\frac{k+\nu+B}{k+\nu} \\
   -\frac{k+\nu-B}{k+\nu}
  \end{pmatrix}+
  \begin{pmatrix}
   q^+\! \\ q^-\!
  \end{pmatrix}r^B+O(r^{1-B})\quad\textrm{as }\;r\downarrow 0 \\
   \Phi(r)\;&=\;2^B\begin{pmatrix}
               1 \\ -1
              \end{pmatrix}e^{-r}
              (1+O(r^{-1}))\quad\,\textrm{as }\;r\to +\infty\,,
 \end{split}
\end{equation}
where
\begin{equation}\label{eq:def_qpm}
 q^\pm\;:=\;\textstyle\frac{4^B(\pm(k+\nu)-B)\Gamma(-2B)}{(k+\nu)\Gamma(-B)}\;\;(\neq 0)\,.
\end{equation}
We also introduce the constants
%\begin{equation}\label{eq:def_ppm}
% p^\pm\;:=\;q^\pm\cdot\textstyle{\frac{(1+\nu)^2\Gamma(-B)}{4^B(1+\nu+B)\Gamma(1-2B)}}\,\|\Phi\|^2_{L^2(\mathbb{R}^+,\mathbb{C}^2)}\;\;(\neq 0)\,.
%\end{equation}
\begin{equation}\label{eq:def_ppm}
 p^\pm\;:=\;q^\pm\cdot\textstyle{\frac{(k+\nu) \cos(B \pi)}{4^B B}}\,\|\Phi\|^2_{L^2(\mathbb{R}^+,\mathbb{C}^2)}\;\;(\neq 0)\,.
\end{equation}

Then the following holds.

\begin{thm}\label{thm:extensions}~

\begin{itemize}
 \item[(i)] Any function $g=\begin{pmatrix} g^+ \\ g^-\end{pmatrix}\in\mathcal{D}(h^*)$ satisfies the short-distance asymptotics
  \begin{equation}\label{eq:coeff_a_b_BIS}
  g(r)\;=\;g_0\, r^{-B}+g_1r^B+o(r^{1/2})\qquad\textrm{as }\;r\downarrow 0
 \end{equation}
 for some $g_0,g_1\in\mathbb{C}^2$ given by the (existing) limits
 \begin{equation}\label{eq:coeff_a_b}
 \begin{split}
  g_0\;&:=\;\lim_{r\downarrow 0} \,r^B g(r) \\
  g_1\;&:=\;\lim_{r\downarrow 0} \,r^{-B}(g(r)-g_0r^{-B})\,.
 \end{split}
 \end{equation}
 \item[(ii)] The self-adjoint extensions of the operator $h$ on $L^2(\mathbb{R}^+,\mathbb{C}^2)$ defined in \eqref{eq:def_h} constitute a one-parameter family $(h_{\beta})_{\beta\in\mathbb{R}\cup{\{\infty\}}}$ of restrictions of the adjoint operator $h^*$, each of which is given by
\begin{equation}\label{eq:Sbeta_bc}
 \begin{split}
  h_{\beta}\;&:=\;h^*\upharpoonright\mathcal{D}(h_{\beta}) \\
  \mathcal{D}(h_{\beta})\;&:=\;\Big\{g\in\mathcal{D}(S^*)\,\Big|\,\frac{g_1^+}{g_0^+}=c_\nu \beta+d_\nu\Big\}\,,
 \end{split}
\end{equation}
where
\begin{equation}\label{eq:defcd}
 \begin{split}
  c_{\nu,k}\;&=\;p^+{\textstyle\Big(\frac{\Gamma(2B)}{\Gamma(B)}\,\frac{k+\nu+B}{k+\nu}\Big)^{\!-1}} \\
  d_{\nu,k}\;&=\;q^+{\textstyle\Big(\frac{\Gamma(2B)}{\Gamma(B)}\,\frac{k+\nu+B}{k+\nu}\Big)^{\!-1}},
 \end{split}
\end{equation}
and $p^+$ and $q^+$ are given, respectively, by \eqref{eq:def_ppm} and \eqref{eq:def_qpm}.
\item[(iii)] The extension $h_D\;:=\;h_{\beta=\infty}$ is the unique (`distinguished') extension satisfying
 \begin{equation}\label{eq:SD_uniqueness_properties}
  \mathcal{D}(h_D)\subset H^{1/2}(\mathbb{R}^+,\mathbb{C}^2)\quad\textrm{or}\quad \mathcal{D}(h_D)\subset\mathcal{D}[r^{-1}]\,,
 \end{equation}
 where the latter is the form domain of the multiplication operator by $r^{-1}$ on each component of $L^2(\mathbb{R}^+,\mathbb{C}^2)$ (the space of `finite potential energy'). $h_D$ is invertible on $L^2(\mathbb{R}^+,\mathbb{C}^2)$ with everywhere defined and bounded inverse.
\item[(iv)] The operator $h_\beta$ is invertible on the whole $L^2(\mathbb{R}^+,\mathbb{C}^2)$ if and only if $\beta\neq 0$, in which case
 \begin{equation}\label{eq:Sbeta-1}
  h_\beta^{-1}\;=\;h_D^{-1}+\frac{1}{\,\beta\|\Phi\|^{2}}\:|\Phi\rangle\langle\Phi|\,.
 \end{equation}
\item[(v)] For each %invertible 
 extension $h_\beta$,
 \begin{equation}\label{eq:sigmaess}
  \sigma_{\mathrm{ess}}(h_\beta)\;=\;\sigma_{\mathrm{ess}}(h_D)\;=\;(-\infty,-1]\cup[1,+\infty)\,.
 \end{equation}
 \item[(vi)]  The gap in the spectrum $\sigma(h_\beta)$ around $E=0$ is at least the interval $(-E(\beta),E(\beta))$, where
 \begin{equation}\label{eq:Ebeta}
  E(\beta)\;:=\;\frac{|\beta|}{\,|\beta| \|h_D^{-1}\|+1\,}\,.
 \end{equation}
\end{itemize}
\end{thm}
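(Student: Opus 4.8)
\textit{Proof proposal.}
The plan is to derive everything from the rank-one resolvent identity \eqref{eq:Sbeta-1} together with one elementary norm estimate, treating the two boundary values of $\beta$ separately first. For $\beta=0$ one has $E(0)=0$, so the claimed inclusion $(-E(0),E(0))=\emptyset$ is vacuous and nothing has to be shown. For $\beta=\infty$ one has $h_\infty=h_D$ and $E(\infty)=\lim_{\beta\to\infty}E(\beta)=\|h_D^{-1}\|^{-1}$, and the statement is precisely the assertion that a self-adjoint operator admitting an everywhere defined bounded inverse (part (iii)) has empty spectrum in the open interval $\big(-\|h_D^{-1}\|^{-1},\|h_D^{-1}\|^{-1}\big)$; this I would obtain from $\sigma(h_D^{-1})\subseteq[-\|h_D^{-1}\|,\|h_D^{-1}\|]$ and the spectral correspondence $\lambda\in\sigma(h_D)\Leftrightarrow\lambda^{-1}\in\sigma(h_D^{-1})$, valid because $0\notin\sigma(h_D)$.

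For the generic case $\beta\in\mathbb{R}\setminus\{0\}$ I would argue as follows. By part (iv) the operator $h_\beta$ is invertible on the whole $L^2(\mathbb{R}^+,\mathbb{C}^2)$ and
\[
 h_\beta^{-1}\;=\;h_D^{-1}+\frac{1}{\,\beta\|\Phi\|^{2}}\,|\Phi\rangle\langle\Phi|\,.
\]
The rank-one summand has operator norm $(|\beta|\,\|\Phi\|^{2})^{-1}\,\big\||\Phi\rangle\langle\Phi|\big\|=(|\beta|\,\|\Phi\|^{2})^{-1}\|\Phi\|^{2}=|\beta|^{-1}$, since $\big\||\Phi\rangle\langle\Phi|\big\|=\|\Phi\|_{L^2}^{2}$. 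Hence, by the triangle inequality and the boundedness of $h_D^{-1}$ from part (iii),
\[
 \|h_\beta^{-1}\|\;\leqslant\;\|h_D^{-1}\|+\frac{1}{|\beta|}\;=\;\frac{|\beta|\,\|h_D^{-1}\|+1}{|\beta|}\;=\;\frac{1}{E(\beta)}\,,
\]
with $E(\beta)$ as in \eqref{eq:Ebeta}.

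The conclusion is then purely spectral. Since $h_\beta$ is self-adjoint, the bounded operator $h_\beta^{-1}$ is self-adjoint as well, so $\sigma(h_\beta^{-1})\subseteq[-\|h_\beta^{-1}\|,\|h_\beta^{-1}\|]\subseteq[-E(\beta)^{-1},E(\beta)^{-1}]$. Because $h_\beta$ is invertible, $0\notin\sigma(h_\beta)$ and $\lambda\mapsto\lambda^{-1}$ maps $\sigma(h_\beta)$ bijectively onto $\sigma(h_\beta^{-1})$; therefore every $\lambda\in\sigma(h_\beta)$ satisfies $|\lambda|\geqslant E(\beta)$, i.e. $\sigma(h_\beta)\cap(-E(\beta),E(\beta))=\emptyset$, which is the claim (combined with \eqref{eq:sigmaess}, which guarantees the rest of the spectrum sits outside $(-1,1)$ apart from possible eigenvalues). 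The argument is essentially obstruction-free; the only points deserving a line of justification are the identity $\big\||\Phi\rangle\langle\Phi|\big\|=\|\Phi\|_{L^2}^{2}$ and the passage from the norm bound on the self-adjoint $h_\beta^{-1}$ to the spectral gap of $h_\beta$ via inversion — both entirely standard.
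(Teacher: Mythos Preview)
Your proposal addresses only part (vi), taking parts (iii) and (iv) as given; it does not attempt parts (i)--(v) at all. If the intent was to prove the full theorem, that is a genuine gap: the substantive content lies in (i)--(iv) (the asymptotics of $\mathcal{D}(h^*)$, the boundary-condition parametrisation, the distinguished extension, and the Kre{\u\i}n-type resolvent formula), and these require the machinery of the Kre{\u\i}n--Vi\v{s}ik--Birman/Grubb extension scheme together with explicit analysis of the deficiency subspace spanned by $\Phi$. Your argument for (vi) is entirely correct and is indeed the natural route once (iii) and (iv) are in hand --- bounding $\|h_\beta^{-1}\|$ via the triangle inequality on \eqref{eq:Sbeta-1} and then inverting the spectral radius --- but it cannot stand as a proof of the theorem as stated.

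As for comparison with the paper: the paper does not prove Theorem~\ref{thm:extensions} here at all. It is explicitly presented as a summary of results established in the authors' earlier work (Ref.~\cite{MG_DiracCoulomb2017}, Section~2), so there is no in-paper proof to compare against. Your derivation of (vi) from (iii)--(iv) is presumably the same as, or very close to, what appears in that reference, since the rank-one resolvent identity leaves essentially no room for an alternative argument.
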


%\begin{rem}
% A completely analogous theorem holds for the sector $k=-1$, that is, for the self-adjoint extensions of the operator now defined as
%\begin{equation}\label{eq:def_h_with_k_minus1}
%\begin{split}
% \mathcal{D}(h)\;&:=\;C^\infty_0(\mathbb{R}^+,\mathbb{C}^2) \\
% h\;&:=\;\begin{pmatrix}
%                   1+\frac{\nu}{r} & -\frac{\ud}{\ud r}-\frac{1}{r} \\
%                   \frac{\ud}{\ud r}-\frac{1}{r} & -1+\frac{\nu}{r}\,.
%                  \end{pmatrix}.
%\end{split}
%\end{equation}
%We omit the details of how formulas 
%\eqref{eq:kernelSstar}-\eqref{eq:def_ppm} and \eqref{eq:defcd} change accordingly.
%\end{rem}

Let us come now to the main object of this work. We aim at qualifying the spectra of the generic extension $h_\beta$, as compared to the known spectrum of the distinguished extension $h_D$. In fact, we observe that there is a gap in the literature between the well-established knowledge on the one hand that for critical couplings the Dirac-Coulomb Hamiltonian admits an infinite multiplicity of self-adjoint realisations, and the availability on the other hand of an eigenvalue formula for the distinguished extension only.

Our recent classification\cite{MG_DiracCoulomb2017} of the whole family of self-adjoint realisations of $h$ turns out to provide the appropriate scheme to fill this gap in.

First, the natural question arises why the `classical' methods for the determination of Sommerfeld's formula, mainly the ODE/truncation-of-series approach and the supersymmetric approach, did not determine other than the eigenvalues of the \emph{distinguished} extension. We address this point in Section \ref{sec:Sommerfeld_formula}, exhibiting the precise steps of such classical methods in which one naturally selects only the discrete spectrum of the distinguished (and in fact also of a `mirror' distinguished) realisation.

It actually turns out that there are no explicit alternatives: indeed, in the ODE approach to the differential eigenvalue problem the only alternative to truncating series is to deal with eigenfunctions expressed by infinite series, and imposing the eigenfunction with eigenvalue $E$ to belong to some domain $\mathcal{D}(h_\beta)$ does not produce a closed formula for $E$ any longer; on the other hand, in the supersymmetric approach the first order differential eigenvalue problem is studied by an auxiliary second order differential problem whose solutions only exhibit the boundary condition typical of the distinguished (or also of the `mirror' distinguished) extension, with no access to different boundary conditions.

Next, we address the issue of how the eigenvalue formula \eqref{eq:Sommerfeld_formula}, valid for $\beta=\infty$, gets modified for a generic extension parameter $\beta$. Our result is the following.

\begin{thm}\label{thm:spectrum-beta}
Let $k\in\{\pm 1\}$ and let $(h_\beta)_{\beta\in(-\infty,\infty]}$ be the family of self-adjoint realisations, in the critical regime $|\nu|\in(\frac{\sqrt{3}}{2},1)$ of the Dirac-Coulomb Hamiltonian $h$ defined in \eqref{eq:def_h}, according to the parametrisation given by Theorem \ref{thm:extensions}. The discrete spectrum of a generic realisation $h_\beta$ consists of the countable collection %%%QUI TOLTO MODULO DI NU
 \begin{equation}
  \sigma_{\mathrm{disc}}(h_\beta)\;=\;\big\{E_n^{(\beta)}\,|\,n\in\mathbb{N}_0\,, n\geqslant n_0\big\}\;\subset\;(-1,1)
 \end{equation}
 of eigenvalues $E_n^{(\beta)}$ which are all the possible roots, enumerated in decreasing order when $\nu>0$ and in increasing order when $\nu<0$, of the transcendental equation
 \begin{equation}\label{eq:fEn_formula}
  \mathfrak{F}_{\nu,k}(E_n^{(\beta)})\;=\;c_{\nu,k} \,\beta + d_{\nu,k}\,,
 \end{equation}
 where the constants $c_{\nu,k}$ and $d_{\nu,k}$ are given by \eqref{eq:defcd}, and
 \begin{equation}\label{eq:Fnu}
 \begin{split}
  \mathfrak{F}_{\nu,k}(E)\;&:=\; \big(2 \sqrt{1-E^2}\big)^{2\sqrt{1-\nu^2}}\;\frac{\Gamma(-2\sqrt{1-\nu^2})}{\Gamma(2\sqrt{1-\nu^2})}\;\frac{\nu\sqrt{\frac{1-E}{1+E}}+k-\sqrt{1-\nu^2}}{\nu\sqrt{\frac{1-E}{1+E}}+k+\sqrt{1-\nu^2}}\;\times \\
  &\qquad\qquad\qquad\times\frac{\Gamma\big(\frac{\nu E }{\sqrt{1-E^2}}+\sqrt{1-\nu^2}\big)}{\Gamma\big(\frac{\nu E}{\sqrt{1-E^2}}-\sqrt{1-\nu^2}\big)}\,.
 \end{split}
 \end{equation}
 The starting index of the enumeration is $n_0=0$ if $k$ and $\nu$ have the same sign, and $n_0=1$ otherwise.
%   \begin{equation}
%   \mathfrak{F}_\nu(E)\;:=\; (2 \sqrt{1-E^2})^{2B}\;\frac{\Gamma(-2B)}{\Gamma(2B)}\;\frac{\Gamma(\frac{\nu E }{\sqrt{1-E^2}}+B)}{\Gamma(\frac{\nu E}{\sqrt{1-E^2}}-B)}\;\frac{\nu\sqrt{\frac{1-E}{1+E}}+1-B}{\nu\sqrt{\frac{1-E}{1+E}}+1+B}
%  \end{equation}
\end{thm}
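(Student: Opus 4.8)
The plan is to reduce the eigenvalue problem for $h_\beta$ in the spectral gap to the boundary condition of Theorem~\ref{thm:extensions}(ii) and then to analyse the resulting transcendental relation in $E$. By Theorem~\ref{thm:extensions}(v), $\sigma_{\mathrm{disc}}(h_\beta)\subset(-1,1)$, and on that gap the spectrum of $h_\beta$ is purely point; since in the critical regime $\widetilde h$ is limit-point at $+\infty$ and limit-circle at $0$, for each $E\in(-1,1)$ the equation $\widetilde h\,\psi=E\psi$ has, up to a multiplicative constant, a unique solution $\psi_E$ that is square-integrable near $+\infty$, and this $\psi_E$ then lies in $L^2(\mathbb R^+,\mathbb C^2)$, hence in $\mathcal D(h^*)$. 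Consequently $E\in\sigma_{\mathrm{disc}}(h_\beta)$ if and only if $\psi_E\in\mathcal D(h_\beta)$, i.e. (Theorem~\ref{thm:extensions}(i)) if and only if the asymptotic coefficients $(\psi_E)_0,(\psi_E)_1\in\mathbb C^2$ obey $(\psi_E)_1^+/(\psi_E)_0^+=c_{\nu,k}\beta+d_{\nu,k}$, in which case $E$ is a simple eigenvalue. I would produce $\psi_E$ by diagonalising the first-order system via a Whittaker-type substitution together with a constant (in $r$) rotation of the two spinor components, reducing it to confluent hypergeometric form; the solution decaying at $+\infty$ is then expressed through Tricomi functions $U(a,b,2\sqrt{1-E^2}\,r)$ with $E$-dependent parameters, the relevant ones being $a=\frac{\nu E}{\sqrt{1-E^2}}+B$ and $b=1+2B$ (with $B=\sqrt{1-\nu^2}$ as in \eqref{eq:Bnu}), reducing at $E=0$ to the expression \eqref{eq:kernelSstar} for $\Phi$, which is precisely $\psi_E$ at $E=0$.

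The next step is to extract the short-distance asymptotics \eqref{eq:coeff_a_b_BIS} of $\psi_E$ from the $z\downarrow0$ connection formula for $U(a,b,z)$, which splits it into a piece $\propto z^{1-b}$ and a constant piece with coefficients $\Gamma(b-1)/\Gamma(a)$ and $\Gamma(1-b)/\Gamma(a-b+1)$; combined with the Whittaker prefactor this gives $(\psi_E)_0$ (coefficient of $r^{-B}$) and $(\psi_E)_1$ (coefficient of $r^{B}$) explicitly. Forming the ratio $(\psi_E)_1^+/(\psi_E)_0^+$ and simplifying should reproduce exactly $\mathfrak F_{\nu,k}(E)$ of \eqref{eq:Fnu}: the power $(2\sqrt{1-E^2})^{2B}$ from $z^{1-b}$, the constant $\Gamma(-2B)/\Gamma(2B)$ from the two connection coefficients, the rational factor $\tfrac{\nu\sqrt{(1-E)/(1+E)}+k-B}{\nu\sqrt{(1-E)/(1+E)}+k+B}$ from recombining the two spinor components, and $\Gamma(a)/\Gamma(a-2B)$ from the $a$-dependence of the coefficients. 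Then $E\in\sigma_{\mathrm{disc}}(h_\beta)$ iff $\mathfrak F_{\nu,k}(E)=c_{\nu,k}\beta+d_{\nu,k}$, which is \eqref{eq:fEn_formula}. Two consistency checks fix the normalisation: at $E=0$ one must recover $\mathfrak F_{\nu,k}(0)=d_{\nu,k}$ (equivalently, by \eqref{eq:Phi_asymptotics}--\eqref{eq:defcd}, that $0$ is an eigenvalue of $h_{\beta=0}$ with eigenfunction $\Phi$, consistent with Theorem~\ref{thm:extensions}(iv)); and the distinguished case $\beta=\infty$ forces $(\psi_E)_0^+=0$, i.e. the poles of $\mathfrak F_{\nu,k}$, i.e. $a\in-\mathbb N_0$, which solved for $E$ reproduces Sommerfeld's formula \eqref{eq:Sommerfeld_formula} (equivalently \eqref{eq:EVEnk1}).

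It remains to count and order the roots of \eqref{eq:fEn_formula}. On $(-1,1)$ the map $E\mapsto\frac{\nu E}{\sqrt{1-E^2}}$ is a strictly monotone bijection onto $\mathbb R$ (increasing if $\nu>0$, decreasing if $\nu<0$), so $\mathfrak F_{\nu,k}$ is meromorphic on the gap, with poles exactly at the $E$'s where $a\in-\mathbb N_0$ --- a sequence accumulating only at the spectral edge $E=\mathrm{sgn}(\nu)$ --- and with zeros interlacing them where $a-2B\in-\mathbb N_0$. I would then show that $\mathfrak F_{\nu,k}$ is strictly monotone on each interval between consecutive poles and surjective onto $\mathbb R$ there; the cleanest route is to extend the Kre{\u\i}n-type resolvent identity of Theorem~\ref{thm:extensions}(iv) to $z$ in the gap and recognise $E\mapsto(d_{\nu,k}-\mathfrak F_{\nu,k}(E))/c_{\nu,k}$ as an affine image of the Herglotz $M$-function of the family $(h_\beta)_\beta$, whose poles coincide with $\sigma_{\mathrm{disc}}(h_D)$ and which is automatically monotone between them. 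Hence \eqref{eq:fEn_formula} has exactly one root in each such interval, and these roots form the asserted countable collection, enumerated decreasingly when $\nu>0$ and increasingly when $\nu<0$. Finally, $n_0$ is pinned down by the extremal interval adjacent to the non-accumulating edge $E=\mathrm{sgn}(\nu)$: an elementary computation shows that the rational factor in \eqref{eq:Fnu} vanishes at the would-be first pole $a=0$ --- thereby deleting that eigenvalue --- precisely when $1+k\,\mathrm{sgn}(\nu)=0$, i.e. when $k$ and $\nu$ have opposite signs, whence $n_0=1$; when $\mathrm{sgn}(k)=\mathrm{sgn}(\nu)$ no such cancellation occurs and $n_0=0$.

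The bulk of the work, and the step most prone to error, is the asymptotic bookkeeping above: matching the precise constellation of $\Gamma$-function and power factors, together with the normalising constants $c_{\nu,k},d_{\nu,k}$ inherited from Theorem~\ref{thm:extensions}, to the exact shape \eqref{eq:Fnu}; the two consistency checks at $E=0$ and $\beta=\infty$ guard against constant errors. Conceptually the more delicate point is the monotonicity of $\mathfrak F_{\nu,k}$ between its poles --- for which the Herglotz $M$-function viewpoint is the natural tool, a direct digamma estimate being awkward since the relevant arguments straddle poles of the digamma function --- and, with it, the determination of $n_0$ via the behaviour of $\mathfrak F_{\nu,k}$ at the non-accumulating spectral edge.
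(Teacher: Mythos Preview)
Your strategy coincides with the paper's: reduce the radial system to confluent hypergeometric form, select the Tricomi solution as the unique $L^2$ eigenfunction of $h^*$ at energy $E$, read off its $r\downarrow 0$ asymptotics via the connection formula, and impose the boundary condition of Theorem~\ref{thm:extensions}(ii) on the ratio $g_1^+/g_0^+=\mathfrak F_{\nu,k}(E)$. (The paper works with the Kummer-equivalent parameters $a=\frac{\nu E}{\sqrt{1-E^2}}-B$, $b=1-2B$; the difference from your $a,b$ is cosmetic.) The one genuine variation is in establishing the monotonicity of $\mathfrak F_{\nu,k}$ between consecutive poles: the paper simply asserts it as ``straightforward from properties of the $\Gamma$-function'', whereas your route via the Kre{\u\i}n $Q$-function\,/\,Weyl $M$-function is more structural and sidesteps any digamma estimates --- monotonicity then drops out of the Herglotz property. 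Your $n_0$ argument (cancellation of the numerator factor of $\mathfrak F_{\nu,k}$ against the would-be pole at $a=0$ precisely when $\mathrm{sgn}(k)\neq\mathrm{sgn}(\nu)$) is essentially the same case analysis the paper carries out, though there it is deferred to the proof of Corollary~\ref{cor:eigenvalues_distinguished}. One slip to fix: with your $a=\frac{\nu E}{\sqrt{1-E^2}}+B$, the poles $a\in-\mathbb N_0$ force $\nu E<0$, so they accumulate at the edge $E=-\mathrm{sgn}(\nu)$, not at $\mathrm{sgn}(\nu)$ as you first write; your later identification of the non-accumulating edge as $\mathrm{sgn}(\nu)$ is the correct one.
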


Equation \eqref{eq:fEn_formula} of Theorem \ref{thm:spectrum-beta}, that will be proved in Section \ref{sec:beta-spectrum}, provides the implicit formula for the eigenvalues of the generic extension $h_\beta$. 
A formula of the eigenfunctions corresponding to the eigenvalues $E^{(\beta)}_n$ is found in the proof of Theorem \ref{thm:spectrum-beta} -- see \eqref{eq:psi_eigenf} in Section \ref{sec:beta-spectrum}.

In particular, equation \eqref{eq:fEn_formula} contains Sommerfeld's formula %\eqref{eq:Sommerfeld_formula} 
for the distinguished extension of $h$, namely the extension with $\beta=\infty$. For a comparison with the existing literature, let us formulate the latter consequence for generic $k\in\{\pm 1\}$.

%\begin{cor}\label{cor:eigenvalues_distinguished}
% Under the assumptions of Theorem \ref{thm:spectrum-beta}, let $h_D$ be the distinguished (i.e., $\beta=\infty$) self-adjoint extension of $h$. Then the eigenvalues $(E_n)_{n\in\mathbb{N}_0}$ of $h_D$ are given by
% \begin{equation}\label{eq:EVEnk1}
% \begin{array}{ll}
%  \begin{split}
%  E_n\;&=\;-\,\mathrm{sign}(\nu)\,\Big(1+\frac{\nu^2}{(n+\sqrt{1-\nu^2})^2}\Big)^{\!-1/2} \\
%  &\qquad\quad n\in\{0,1,2,\dots\}
% \end{split} & \qquad\textrm{when }k=1\;\;\;{\;}
% \end{array}
% \end{equation}
%and by
% \begin{equation}\label{eq:EVEnkm1}
% \begin{array}{ll}
%  \begin{split}
%  E_n\;&=\;-\,\mathrm{sign}(\nu)\,\Big(1+\frac{\nu^2}{(n+\sqrt{1-\nu^2})^2}\Big)^{\!-1/2} \\
%  &\qquad\quad n\in\{1,2,\dots\} \\
%  E_0\;&=\; \sqrt{1-\nu^2}
% \end{split} & \qquad\textrm{when }k=-1\,.
% \end{array}
% \end{equation}
%\end{cor}

\begin{cor}\label{cor:eigenvalues_distinguished}
 Under the assumptions of Theorem \ref{thm:spectrum-beta}, let $h_D$ be the distinguished (i.e., $\beta=\infty$) self-adjoint extension of $h$. Then the eigenvalues $(E_n)_{n=n_0}^\infty$ of $h_D$ are given by
 \begin{equation}\label{eq:EVEnk1}
  E_n\;=\;-\,\mathrm{sign}(\nu)\,\Big(1+\frac{\nu^2}{(n+\sqrt{1-\nu^2})^2}\Big)^{\!-1/2}\,,
 \end{equation}
 the starting index of the enumeration being $n_0=0$ if $k$ and  $\nu$ have the same sign, and $n_0=1$ otherwise.
\end{cor}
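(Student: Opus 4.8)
\textbf{Proof proposal for Corollary \ref{cor:eigenvalues_distinguished}.}
The plan is to read off the $\beta=\infty$ case directly from the transcendental equation \eqref{eq:fEn_formula} of Theorem \ref{thm:spectrum-beta}. Since the constant $c_{\nu,k}$ of \eqref{eq:defcd} is non-zero, letting $\beta\to\infty$ in \eqref{eq:fEn_formula} turns the eigenvalue condition $\mathfrak{F}_{\nu,k}(E)=c_{\nu,k}\beta+d_{\nu,k}$ into $\mathfrak{F}_{\nu,k}(E)=\infty$; this is the analytic counterpart of the boundary condition $g_0^+=0$ that singles out $\mathcal{D}(h_D)=\mathcal{D}(h_{\beta=\infty})$ in \eqref{eq:Sbeta_bc}. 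Hence the task reduces to locating, for $E\in(-1,1)$, the poles of the meromorphic function $E\mapsto\mathfrak{F}_{\nu,k}(E)$ given by \eqref{eq:Fnu}.

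Writing $B=\sqrt{1-\nu^2}\in(0,\tfrac12)$, the prefactor $\big(2\sqrt{1-E^2}\big)^{2B}\Gamma(-2B)/\Gamma(2B)$ is finite and non-zero for $E\in(-1,1)$; in the last ratio of \eqref{eq:Fnu} the factor $\Gamma\big(\tfrac{\nu E}{\sqrt{1-E^2}}-B\big)$ is never singular there, so the only source of a pole is $\Gamma\big(\tfrac{\nu E}{\sqrt{1-E^2}}+B\big)$, which blows up precisely when $\tfrac{\nu E}{\sqrt{1-E^2}}+B=-n$ for some $n\in\mathbb{N}_0$. Squaring this relation yields $E^2\big(\nu^2+(n+B)^2\big)=(n+B)^2$, i.e. $E^{-2}=1+\nu^2/(n+\sqrt{1-\nu^2})^2$, so $E=\pm\big(1+\nu^2/(n+\sqrt{1-\nu^2})^2\big)^{-1/2}$; and because $-n-B<0$ together with $\sqrt{1-E^2}>0$ forces $\nu E<0$, only the sign $E=-\,\mathrm{sign}(\nu)\big(\cdots\big)^{-1/2}$ is admissible, which is exactly \eqref{eq:EVEnk1}. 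One must still rule out additional poles coming from a vanishing denominator $\nu\sqrt{\tfrac{1-E}{1+E}}+k+B$ of the middle ratio: at any such $E$ the factor $\Gamma\big(\tfrac{\nu E}{\sqrt{1-E^2}}-B\big)$ turns out to be simultaneously singular of matching order, so $\mathfrak{F}_{\nu,k}$ has only a removable singularity and no spurious eigenvalue appears. Checking this cancellation of orders is the delicate point, and I expect it to be the main obstacle of the argument.

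Finally, to fix the starting index one evaluates the numerator of the middle ratio at $E=E_n$: using $\nu^2+B^2=1$ one finds $\nu\sqrt{\tfrac{1-E_n}{1+E_n}}+k-B=\mathrm{sign}(\nu)\sqrt{\nu^2+(n+B)^2}+n+k$, which for $n=0$ collapses to $\mathrm{sign}(\nu)+k$. Thus when $k$ and $\nu$ have opposite signs this numerator vanishes at $E_0$ and cancels the simple pole of $\Gamma\big(\tfrac{\nu E}{\sqrt{1-E^2}}+B\big)$ there, so $E_0\notin\sigma_{\mathrm{disc}}(h_D)$ and the enumeration starts at $n_0=1$; when $k$ and $\nu$ share the sign the numerator equals $2\,\mathrm{sign}(\nu)\neq0$, the pole is genuine and $n_0=0$. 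For $n\geqslant1$ the quantity $\mathrm{sign}(\nu)\sqrt{\nu^2+(n+B)^2}+n+k$ is non-zero (a zero would force $n=0$), so every such $E_n$ is a true pole of $\mathfrak{F}_{\nu,k}$ and hence, by Theorem \ref{thm:spectrum-beta}, an eigenvalue of $h_D$; the monotonicity of $n\mapsto|E_n|$ then produces the stated decreasing (for $\nu>0$) or increasing (for $\nu<0$) ordering, and this exhausts $\sigma_{\mathrm{disc}}(h_D)$.
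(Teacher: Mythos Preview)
Your approach is the same as the paper's: both read off the $\beta=\infty$ eigenvalues as the poles of $\mathfrak{F}_{\nu,k}$ and then perform a factor-by-factor bookkeeping of the expression \eqref{eq:Fnu}. The case analysis for the starting index $n_0$ via the computation $\nu\sqrt{\tfrac{1-E_n}{1+E_n}}+k-B=\mathrm{sign}(\nu)\sqrt{\nu^2+(n+B)^2}+n+k$ is correct and in fact a bit slicker than the paper's version.

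There is, however, a factual slip that leaves a small gap. You assert that $\Gamma\!\big(\tfrac{\nu E}{\sqrt{1-E^2}}-B\big)$ is ``never singular'' on $(-1,1)$; this is false. For instance, when $\nu>0$ and $E=B$ one has $\tfrac{\nu E}{\sqrt{1-E^2}}-B=0$, and more generally the argument hits every non-positive integer $-m$ at $E=\mathrm{sign}(\nu)\big(1+\tfrac{\nu^2}{(m-B)^2}\big)^{-1/2}$. What is true, and what you actually need, is that these singularities never coincide with those of the numerator $\Gamma\!\big(\tfrac{\nu E}{\sqrt{1-E^2}}+B\big)$: coincidence would require $-n-B=-m+B$ for some $m,n\in\mathbb{N}_0$, i.e.\ $2B\in\mathbb{Z}$, which is excluded since $B\in(0,\tfrac12)$. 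With this correction the conclusion that the poles of $\mathfrak{F}_{\nu,k}$ arising from the last ratio come only from the numerator $\Gamma$ stands. Likewise, your remark that the zero of $\nu\sqrt{\tfrac{1-E}{1+E}}+k+B$ is always matched by a pole of $\Gamma\!\big(\tfrac{\nu E}{\sqrt{1-E^2}}-B\big)$ is correct but should be verified rather than ``expected'': at such a point one computes $\tfrac{\nu E}{\sqrt{1-E^2}}-B=0$ directly (using $\nu^2+B^2=1$), so the cancellation is exact and of first order. Once these two checks are in place your argument is complete and coincides with the paper's.
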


The first five eigenvalues $E^{(\beta)}_0,\dots,E^{(\beta)}_4$ for generic $\beta$ are plotted in Figure \ref{fig:Evs} for the concrete case $k=1$, $\nu>0$. We obtained this plot by computing numerically the intersection points of the curve $E\mapsto \mathfrak{F}_{\nu,k}(E)$ with horizontal lines corresponding to various values of $c_{\nu,k}\,\beta+d_{\nu,k}$. In this case when $\beta>0$ all eigenvalues are strictly negative (and accumulate to $-1$),  whereas for a region of negative $\beta$'s the first eigenvalue is positive. As to be expected, $E^{(\beta)}_0=0$ only for $\beta=0$: this corresponds to the sole non-invertible extension.

\begin{figure}[h]
\begin{center}
\includegraphics[scale=0.43]{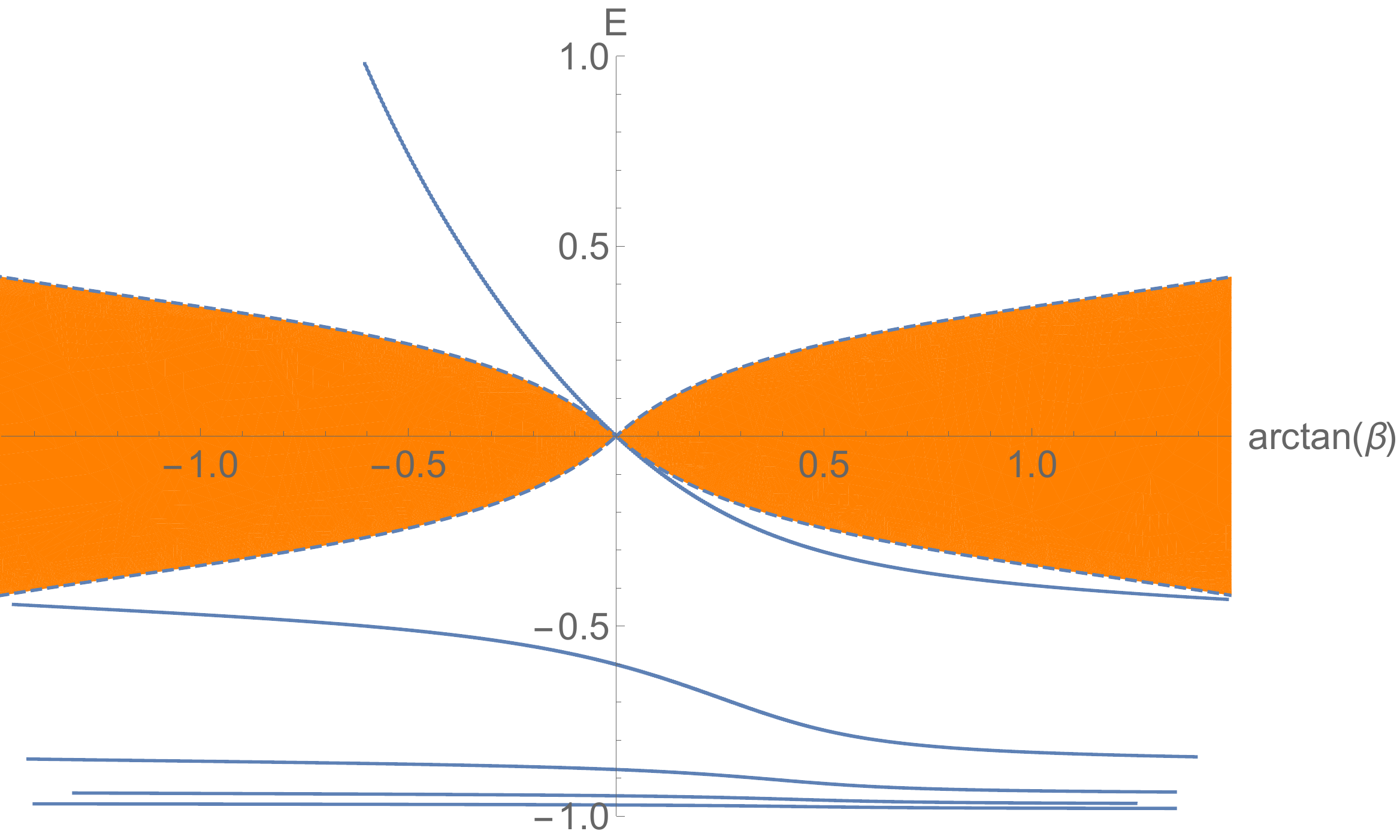}
\end{center}
\caption{Numerical computation of the eigenvalues $E^{(\beta)}_n$ as functions of $\beta$, for $k=1$ and $\nu=0.9$. The shaded area is the region $|E|<E(\beta)$, with $E(\beta)$ given by \eqref{eq:Ebeta}, and indicates the estimated gap in the spectrum around zero, according to Theorem \ref{thm:spectrum-beta}(vi).}\label{fig:Evs}
\end{figure}

\begin{figure}[h]
\begin{center}
\includegraphics[scale=0.3]{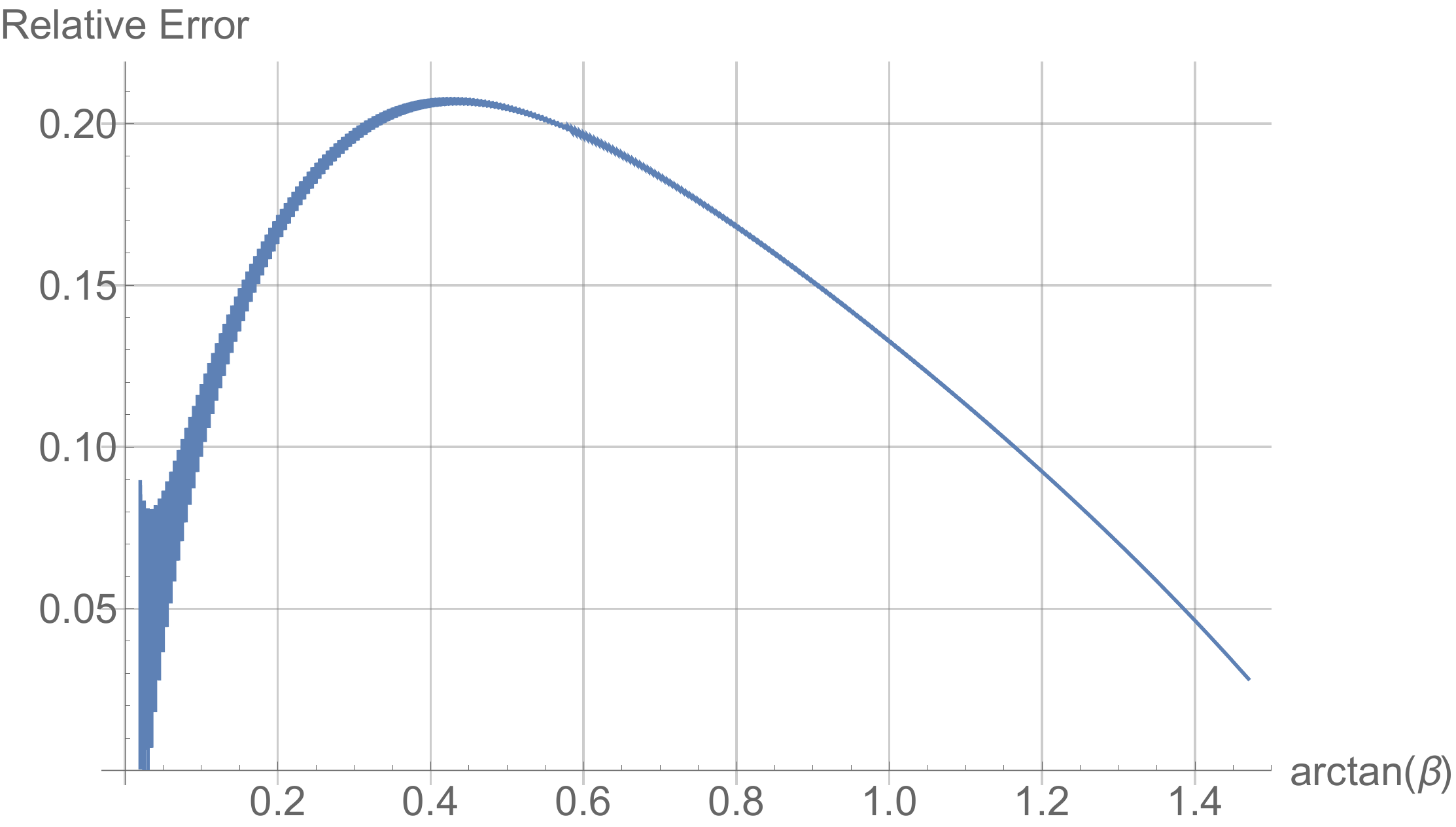}
\end{center}
\caption{Relative error on the estimate of the ground state energy for positive $\beta$, in the case $k=1$ and $\nu=0.9$. The worse relative error is reached for $\beta \sim 0.58$ and amounts to about $20 \%$.}
\label{fig:Errors}
\end{figure}

It follows from the detailed discussion of the behaviour of  $\mathfrak{F}_{\nu,k}(E)$ (in particular, of the vertical asymptotes of $\mathfrak{F}_{\nu,k}(E)$) which we are going to develop in Section \ref{sec:beta-spectrum}  that each $E^{(\beta)}_n$ is smooth and strictly monotone in $\beta$, and it moves with continuity from $\beta=(+\infty)^-$ to $\beta=(-\infty)^+$. This results in a typical \emph{fibred structure} of the union of all the discrete spectra $\sigma_{\mathrm{disc}}(h_\beta)$, with
\begin{equation}
 \bigcup_{\beta\in(-\infty,+\infty]}\big\{E_n^{(\beta)}\,|\,n\in\mathbb{N}_0\,, n\geqslant n_0\big\}\;=\;(-1,1)\,.
\end{equation}
This is a common phenomenon for the discrete spectra of one-parameter families of self-adjoint extensions of a given densely defined symmetric operator, where each extension is a rank-one perturbation, in the resolvent sense, of a reference extension: the complement of the essential spectrum, which is the same for all the extensions, is fibred by the union of all discrete spectra. We are already familiar with this phenomenon, to mention another physically relevant case, in the context of Hamiltonians of contact interaction, for example the two-body Hamiltonian\cite{albeverio-solvable} or the three-body `Ter-Martyrosyan--Skornyakov' Hamiltonian\cite{Minlos-Faddeev-1961-2}.

% 
% with
% \begin{equation}
% \begin{split}
%  \lim_{\beta\to +\infty}E^{(\beta)}_n\;& =\;\lim_{\beta\to -\infty}E^{(\beta)}_{n+1}\,,\qquad \lim_{\beta\downarrow-\frac{d_{\nu,k}}{c_{\nu,k}} }E^{(\beta)}_0\;=\;1\qquad\qquad(\nu>0) \\
%   \lim_{\beta\to +\infty}E^{(\beta)}_{n}\;& =\;\lim_{\beta\to -\infty}E^{(\beta)}_{n+1}\,,\qquad \lim_{\beta\downarrow-\frac{d_{\nu,k}}{c_{\nu,k}} }E^{(\beta)}_0\;=\;1\qquad\qquad(\nu<0)
% \end{split}
% \end{equation}

Let us conclude the presentation of our results with a comment on the accuracy of the estimate \eqref{eq:Ebeta} on the width of the spectral gap around zero for a generic extension $h_\beta$, estimate that we determined recently in Ref.~\cite{MG_DiracCoulomb2017}. Let us choose for concreteness $k=1$ and $\nu>0$: the estimated gap in this case is superimposed in Figure \ref{fig:Evs} and turns out to be asymptotically exact for $\beta\to 0$ and $\beta\to +\infty$, and reasonably precise in between. Owing to Corollary \ref{cor:eigenvalues_distinguished} we can now write
\begin{equation}\label{eq:normahdm1}
 \|h_D^{-1}\|\;=\;B^{-1}\;=\;(1-\nu^2)^{-\frac{1}{2}}\,.
\end{equation}
Thus, from \eqref{eq:Ebeta} and \eqref{eq:normahdm1} we conclude that 
\begin{equation}
 \mathcal{E}_0^{(\beta)}\;:=\;-\frac{\beta}{1+ \beta (1-\nu^2)^{-\frac{1}{2}}}
\end{equation}
provides a good estimate (from below) of the \emph{otherwise not explicitly computable} ground state $E_0^{(\beta))}$ of the generic self-adjoint extension $h_\beta$.

\section{Sommerfeld's eigenvalue formula revisited and spectrum of $h_D$}\label{sec:Sommerfeld_formula}

Prior to addressing the study of the discrete spectrum of the generic self-adjoint realisation $h_\beta$ (the essential spectrum being given by \eqref{eq:sigmaess}), it is instructive to revisit the two main methods by which Sommerfeld's formula has been known since long  for the eigenvalue problem of the differential operator $\widetilde{h}$ given by \eqref{eq:diff_op_h}, which will be the object of this Section.

The material is undoubtedly classical, and standard references will be provided below. Our perspective here is to highlight how such standard methods for the determination of the eigenvalues of $\widetilde{h}$ actually select the discrete spectrum of the distinguished realisation $h_D$ or of a `mirror' distinguished one, and as such are not applicable to the other realisations of $\widetilde{h}$.

In the next Section we shall indeed discuss how Sommerfeld's formula and its actual derivation gets modified for a generic extension $h_\beta$.

For concreteness, let us assume throughout this Section that $k=1$ and $\nu>0$. We therefore consider the eigenvalue problem
\begin{equation}\label{eq:EigenvalueEq}
h_\beta  \psi\;=\;E\psi\,,\qquad\psi\in\mathcal{D}(h_\beta)\,,\qquad E\in(-1,1)
\end{equation}
where $h$ given by \eqref{eq:def_h}, and hence the differential problem $\widetilde{h}\psi=E\psi$ with $\widetilde{h}$ given by \eqref{eq:diff_op_h}.

\subsection{The eigenvalue problem by means of truncation of asymptotic series}\label{sec:ODEmethods}

The historically first approach (see, e.g., Sec.~14 of Ref.~\cite{Bethe-Slapeter-1957}) for the determination of the eigenvalues of the Dirac-Coulomb Hamiltonian is based on ODE methods.

By direct inspection it is seen that the two linearly independent solutions to $\widetilde{h}\psi=E\psi$  have large-$r$ asymptotics $e^{r\sqrt{1-E^2}}$ and $e^{-r\sqrt{1-E^2}}$, only the second one being square-integrable and hence admissible. This suggests the natural re-scaling $\psi\mapsto U\psi=:\phi$ defined by
\begin{equation}\label{eq:rescaling_map_U}
 (U\psi)(\rho)\;:=\;{\textstyle{\frac{1}{\sqrt{2}(1-E^2)^{1/4}}}}\,\exp\big({\textstyle\frac{\rho}{2\sqrt{1-E^2}}}\big)\,\psi\big({\textstyle{\frac{\rho}{2\sqrt{1-E^2}}}}\big)\,,
\end{equation}
which induces the unitary operator $U: L^2(\mathbb{R}^+, \mathbb{C}^2,\ud  r) \to L^2(\mathbb{R}^+,\mathbb{C}^2,e^{-\rho} \, \ud \rho)$ and yields the unitarily equivalent problem
\begin{equation}\label{eq:EigenvalueEq1}
 U(h_\beta -E\mathbbm{1})U^{-1}\phi\;=\;0\,,\qquad\quad \phi\;:=\;U\psi\;\in\; U\mathcal{D}(h_\beta)\,,
\end{equation}
where
\begin{equation}\label{eq:EigenvalueEq2}
U(h_\beta -E\mathbbm{1})U^{-1}\;=\; 2\sqrt{1-E^2} \begin{pmatrix}
\frac{1}{2} \sqrt{\frac{1-E}{1+E}} + \frac{\nu}{\rho} & \frac{1}{2}-\frac{\ud}{\ud\rho} + \frac{1}{\rho} \\
-\frac{1}{2}+\frac{\ud}{\ud\rho}+\frac{1}{\rho} & -\frac{1}{2} \sqrt{\frac{1+E}{1-E}}+ \frac{\nu}{\rho}
\end{pmatrix}.
\end{equation}

The operator \eqref{eq:EigenvalueEq2} has a pole of order one at $\rho=0$, implying that the differential equation \eqref{eq:EigenvalueEq1} can be recast as 
\begin{equation}\label{eq:DiffEq}
\rho \,\phi' \;=\; A(\rho)\, \phi
\end{equation}
with
\begin{equation}\label{eq:VectorFieldMatrix}
A(\rho)\;:=\;\begin{pmatrix}
-1 & -\nu \\
\nu & 1
\end{pmatrix}+
\frac{1}{2}
\begin{pmatrix}
1 &  \sqrt{\frac{1+E}{1-E}} \\
\sqrt{\frac{1-E}{1+E}} & 1
\end{pmatrix}\rho\,.
\end{equation}
In particular it is explicitly checked that $\rho\mapsto A(\rho)$ is holomorphic.

It turns out that the differential problem \eqref{eq:DiffEq}-\eqref{eq:VectorFieldMatrix} is suited for the following standard result in the theory of ordinary differential equations (see, e.g., Ref.~\cite{Wasow_asympt_expansions}, Theorems 5.1 and 5.4).

\begin{prop}\label{prop:Wasow}
Let $z\mapsto B(z)$ be a matrix-valued function whose entries are holomorphic at $z=0$ and whose Taylor series 
$B(z)=\sum_{j=0}^\infty B_j z^j$, say, of radius of convergence $r_B$, has the zero-th component $B_0$ diagonal and with eigenvalues that do not differ by integers. Then there exists a holomorphic matrix-valued function $z\mapsto P(z)$ whose Taylor series $P(z)=\sum_{j=0}^\infty P_j z^j$ converges for $|z|<r_B$ and has zero-th component $P_0=\mathbbm{1}$, such that the transformation
\begin{equation}\label{eq:TransformationDiffEq}
y(z)\;=\;P(z) f(z)
\end{equation}
reduces the differential equation
\begin{equation}\label{eq:GenDiffEq}
z y'(z)\;=\;B(z) y(z)
\end{equation}
to the form
\begin{equation}\label{eq:NormalizedDiffEq}
z f'(z) \;=\;B_0 f(z).
\end{equation}
\end{prop}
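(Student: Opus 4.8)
The plan is to prove Proposition~\ref{prop:Wasow} by the classical method of undetermined coefficients: we seek a formal power series $P(z)=\sum_{j\geqslant 0}P_j z^j$ with $P_0=\mathbbm{1}$ that conjugates \eqref{eq:GenDiffEq} to its "decoupled" model \eqref{eq:NormalizedDiffEq}, and then we show the formal series actually converges in the full disc $|z|<r_B$. First I would substitute $y=Pf$ into $zy'=By$, using $zy'=z P'f+zP f'$ and $zPf'=P B_0 f$ (which is what we want \eqref{eq:NormalizedDiffEq} to give), to obtain the operator identity that $P$ must satisfy, namely
\begin{equation}\label{eq:Psylvester}
 zP'(z)\;=\;B(z)P(z)-P(z)B_0\,.
\end{equation}
Expanding in powers of $z$ and matching the coefficient of $z^j$ yields, for each $j\geqslant 1$, the recursion
\begin{equation}\label{eq:Prec}
 jP_j-(B_0P_j-P_jB_0)\;=\;\sum_{\ell=1}^{j}B_\ell P_{j-\ell}\,,
\end{equation}
with $P_0=\mathbbm{1}$. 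The left-hand side is the linear map $P_j\mapsto jP_j-\mathrm{ad}_{B_0}(P_j)$ acting on the space of matrices; since $B_0$ is diagonal with eigenvalues $\lambda_1,\dots,\lambda_n$, the operator $\mathrm{ad}_{B_0}$ is diagonalised by the matrix units $E_{rs}$ with eigenvalue $\lambda_r-\lambda_s$, so the map on the left of \eqref{eq:Prec} has eigenvalues $j-(\lambda_r-\lambda_s)$. By the hypothesis that the eigenvalues of $B_0$ do not differ by integers, none of these is zero for $j\geqslant 1$, hence \eqref{eq:Prec} uniquely determines $P_j$ from $P_0,\dots,P_{j-1}$. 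This establishes existence and uniqueness of the formal solution; in our application $B_0=\left(\begin{smallmatrix}-1&0\\0&1\end{smallmatrix}\right)$ after diagonalisation (or, in the form \eqref{eq:VectorFieldMatrix}, the eigenvalues of $A(0)$ are $\pm B=\pm\sqrt{1-\nu^2}$, which differ by $2B\notin\mathbb{Z}$ in the critical regime).

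The second, genuinely analytic, half of the argument is to prove that the formal series $\sum P_j z^j$ converges for $|z|<r_B$. The standard route is a majorant (method of majorants) estimate: inverting the left side of \eqref{eq:Prec} costs a factor $\big(j-\max_{r,s}|\lambda_r-\lambda_s|\big)^{-1}$ for $j$ large, which is $O(1/j)$; combined with a geometric bound $\|B_\ell\|\leqslant C\,\rho^{-\ell}$ valid for every $\rho<r_B$ (from the assumed radius of convergence), a routine induction shows $\|P_j\|\leqslant M\,\rho^{-j}$ for a suitable $M=M(\rho)$, and since $\rho<r_B$ is arbitrary the radius of convergence of $\sum P_j z^j$ is at least $r_B$. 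Once convergence is secured, $P(z)$ is holomorphic on $|z|<r_B$ with $P(0)=\mathbbm{1}$, hence invertible near $0$; retracing the substitution shows that \eqref{eq:TransformationDiffEq} indeed transforms \eqref{eq:GenDiffEq} into \eqref{eq:NormalizedDiffEq}, which is the claim. I would simply cite Wasow~\cite{Wasow_asympt_expansions} (Theorems 5.1 and 5.4) for the detailed majorant bookkeeping, as the excerpt already does, rather than reproduce it.

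The main obstacle, and the only place where the hypotheses are used in an essential way, is exactly the non-resonance condition "eigenvalues of $B_0$ do not differ by integers": this is what guarantees the solvability of the recursion \eqref{eq:Prec} at every order \emph{and}, uniformly, the $O(1/j)$ decay of the inverse needed for the convergence estimate. If two eigenvalues differed by a positive integer $j_0$, the step-$j_0$ equation would in general be unsolvable (or solvable only after inserting logarithmic/resonant terms, i.e.\ a nontrivial $B_0'$ correction), and the clean conclusion \eqref{eq:NormalizedDiffEq} would fail. A minor secondary point is to make sure the diagonality of $B_0$ is used correctly — it is what makes $\mathrm{ad}_{B_0}$ diagonalisable over the matrix units with the stated eigenvalues; if $B_0$ were only diagonalisable one would first conjugate by a constant matrix, which is harmless here since $A(0)$ in \eqref{eq:VectorFieldMatrix} has distinct eigenvalues $\pm\sqrt{1-\nu^2}$ and is therefore diagonalisable by a constant similarity, reducing to the stated normal form.
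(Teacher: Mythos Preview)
Your sketch is correct and is precisely the classical argument (formal recursion via $jP_j-[B_0,P_j]=\sum_{\ell=1}^j B_\ell P_{j-\ell}$, solvability from the non-resonance hypothesis, convergence by the majorant method). Note, however, that the paper does not supply its own proof of this proposition at all: it is stated as a standard result and referenced directly to Wasow~\cite{Wasow_asympt_expansions}, Theorems~5.1 and~5.4, exactly as you yourself suggest doing at the end of your proposal.
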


Proposition \ref{prop:Wasow} is indeed applicable to \eqref{eq:DiffEq}-\eqref{eq:VectorFieldMatrix} whenever $\nu\in(0,1)\!\setminus\!\{\frac{\sqrt{3}}{2}\}$ because in this case the matrix $A_0=A(0)$ is diagonalizable and its two distinct eigenvalues $\pm B = \pm \sqrt{1-\nu^2}$ do not differ by an integer (indeed, $2B \notin \mathbb{Z}$). (For the purpose of the discussion of this Section, we do not need to cover the exceptional case $\nu=\frac{\sqrt{3}}{2}$ which presents particular features -- see, e.g., Ref.~\cite{Esteban-Lewin-Sere-2017_DC-minmax-levels}.)

Let us discuss first the (more relevant) critical regime $\nu\in(\frac{\sqrt{3}}{2},1)$: the argument for the sub-critical values $\nu\in(0,\frac{\sqrt{3}}{2})$ is even simpler and will be discussed at the end of this Subsection.

Proposition \ref{prop:Wasow} implies at once that the \emph{general solution} to \eqref{eq:DiffEq}-\eqref{eq:VectorFieldMatrix} has the form
\begin{equation}
\phi(\rho)\;=\;G P(\rho) \left(\begin{matrix} \rho^B & 0 \\ 0 & \rho^{-B} \end{matrix}\right) \phi_0
\end{equation}
for some holomorphic matrix-valued $P(\rho)$ and some vector $\phi_0\in \mathbb{C}^2$, where $G$ is the matrix that diagonalises $A_0$. Component-wise,
\begin{align}
\label{eq:Series1}\phi^{+}(\rho)&\;=\; \sum_{j=0}^\infty a_j^{(B)} \rho^{B+j} + \sum_{j=0}^\infty a_j^{(-B)} \rho^{-B+j}\\
\label{eq:Series2}\phi^{-}(\rho)&\;=\; \sum_{j=0}^\infty b_j^{(B)} \rho^{B+j} + \sum_{j=0}^\infty b_j^{(-B)} \rho^{-B+j}
\end{align}
for suitable coefficients $a_j^{(B)},b_j^{(B)},a_j^{(-B)},b_j^{(-B)}\in\mathbb{C}$, $j\in\mathbb{N}_0$, that must satisfy the consistency relations obtained by plugging \eqref{eq:Series1}-\eqref{eq:Series2} into \eqref{eq:DiffEq}. In doing so, one recognises that $\rho^{B+j}$-powers and $\rho^{-B+j}$-powers never get multiplied among themselves, and moreover each type of powers only gets multiplied by $a_j$ or $b_j$ coefficient of the same type; the net result, when equating to zero the coefficients of each power in the identity $\rho \phi'(\rho)-A(\rho)\phi(\rho)=0$ is the \emph{double set} of recursive equations
\begin{align}
\label{eq:Eq_First}{\textstyle \frac{1}{2} \sqrt{\frac{1-E}{1+E}}} \,a_j^{(\pm B)} + \nu\, a_{j+1}^{(\pm B)} + {\textstyle \frac{1}{2} }\,b_j^{(\pm B)} + (-j\mp B) b_{j+1}^{(\pm B)} &\;=\;0 \\
\label{eq:Eq_Second}{\textstyle -\frac{1}{2}} a_j^{(\pm B)} +(j\pm B+2) \,a_{j+1}^{(\pm B)} - {\textstyle \frac{1}{2} \sqrt{\frac{1+E}{1-E}}}\, b_j^{(\pm B)} + \nu \,b_{j+1}^{(\pm B)} &\;=\;0 \\
\label{eq:Eq_Zeroeth}\nu\,a_0^{(\pm B)}-(\pm B-1)\,b_0^{(\pm B)} &\;=\;0\,,
\end{align}
that is, the upper signs for the $B$-part and the lower signs for the $-B$-part of \eqref{eq:Series1}-\eqref{eq:Series2}.
% \begin{align}
% \label{eq:Eq_First}{\textstyle \frac{1}{2} \sqrt{\frac{1-E}{1+E}}} \,a_j + \nu\, a_{j+1} + {\textstyle \frac{1}{2} }\,b_j + (-j-\gamma) b_{j+1} &\;=\;0 \\
% \label{eq:Eq_Second}{\textstyle -\frac{1}{2}} a_j +(j+\gamma+2) a_{j+1} - {\textstyle \frac{1}{2} \sqrt{\frac{1+E}{1-E}}}\, b_j + \nu \,b_{j+1} &\;=\;0
% \end{align}

The above recursive relations are conveniently re-written in a more manageable form upon introducing $\alpha_j^{(\pm B)}$ and $\beta_j^{(\pm B)}$ through
\begin{equation}\label{eq:change}
 a_j^{(\pm B)}\;=\; \sqrt{1+E} \,(\alpha_j^{(\pm B)}+\beta_j^{(\pm B)})\,,\qquad b_j^{(\pm B)}\;=\;\sqrt{1-E} \, (\alpha_j^{(\pm B)}-\beta_j^{(\pm B)})\,,
\end{equation}
which yields
\begin{eqnarray}
\textstyle \big(\frac{\nu}{\sqrt{1-E^2}} +1 \big)\alpha_{j}^{(\pm B)}+\big( \frac{ E \nu}{\sqrt{1-E^2}} +j\pm B \big) \beta_{j}^{(\pm B)}\;&=&\;0  \label{eq:recursive01}\\
\textstyle\alpha_j^{(\pm B)}+\big(\frac{E \nu}{\sqrt{1-E^2}} -j-1\mp B \big)\, \alpha_{j+1}^{(\pm B)} + \big(\frac{\nu}{\sqrt{1- E^2}} - 1 \big) \,\beta_{j+1}^{(\pm B)} \;&=&\;0 \label{eq:recursive02}\\
\textstyle\big(\frac{\nu E}{\sqrt{1-E^2}} \mp B\big)\,\alpha_0^{(\pm B)}-\big(\frac{\nu}{\sqrt{1-E^2}} -1\big) \,\beta_0^{(\pm B)} \;&=& \;0\,. \label{eq:recursive00}
\end{eqnarray}
Now, plugging \eqref{eq:recursive01} into \eqref{eq:recursive02} yields 
\begin{equation}\label{eq:alpha_beta}
\alpha_{j+1}^{(\pm B)}\;=\;\frac{\frac{E \nu}{\sqrt{1-E^2}} + j \pm B +1 }{(j\pm B+1)^2-B^2}\,\alpha_j^{(\pm B)}\,.
\end{equation}

From \eqref{eq:alpha_beta} one sees that, unless $\alpha^{(\pm B)}_{j_0}=0$ for some $j_0$, in which case $\alpha^{(\pm B)}_{j}=0$ for all $j\geqslant j_0$, one has
\begin{equation}
\frac{\alpha_{j+1}^{(\pm B)}}{\alpha_j^{(\pm B)}}\;=\; j^{-1}+O(j^{-2})\qquad\textrm{as } j \to +\infty\,,
\end{equation}
implying that $\sum_j \alpha_j^{(\pm B)} \rho^{j}$ grows faster than $e^{\rho/2}$ at infinity and hence fails to belong to 
$L^2(\mathbb{R}^+, \mathbb{C}, e^{-\rho} \, \ud \rho)$. Through the transformation \eqref{eq:change} this implies that 
\begin{itemize}
 \item at least one among $\sum_j a_j^{(B)} \rho^{B+j}$ and $\sum_j b_j^{(B)} \rho^{B+j}$,
 \item and at least one among $\sum_j a_j^{(-B)} \rho^{-B+j}$ and $\sum_j b_j^{(-B)} \rho^{-B+j}$
\end{itemize}
are series that diverge faster than $e^{\rho/2}$. This poses the issue of admissibility (in particular, of the square-integrability) of the spinor-valued function $\phi$ given by \eqref{eq:Series1}-\eqref{eq:Series2}, for which the only possible affirmative answers are the following three.

\medskip

\textbf{\underline{First case}:} $\phi\in L^2(\mathbb{R}^+, \mathbb{C}^2, e^{-\rho} \, \ud \rho)$ because the $B$-series in  \eqref{eq:Series1} and the $B$-series in \eqref{eq:Series2} are actually truncated (i.e., polynomials), whereas the $(-B)$-series in  \eqref{eq:Series1} and the $(-B)$-series in \eqref{eq:Series2} vanish identically. This is obtained by imposing that $\alpha^{(B)}_{n+1}=0$ for some $n\in\mathbb{N}_0$ and that all the $a^{(-B)}_j$'s and $b^{(-B)}_j$'s vanish. Then \eqref{eq:alpha_beta} constrains $E$ to attain one of the values 
%\begin{equation}\label{eq:Truncation}
%E_n=  \Big(1+\frac{\nu^2}{(n+\sqrt{1-\nu^2})^2}\Big)^{-\frac{1}{2}}\qquad\textrm{for } \nu\lessgtr 0\,,\qquad n\in\mathbb{N}\,,
%\end{equation}
\begin{equation}\label{eq:Truncation}
E_n= - \Big(1+\frac{\nu^2}{(n+\sqrt{1-\nu^2})^2}\Big)^{-\frac{1}{2}} \qquad n\in\mathbb{N}\,.
\end{equation}
From \eqref{eq:recursive01} it is seen that the vanishing of $\alpha_{n+1}$ implies the vanishing of $\beta_j$ for all $j \geqslant n+2$ while, from \eqref{eq:recursive02}, one sees that $\beta_{n+1} \neq 0$. By direct inspection in \eqref{eq:recursive00} one sees that also $E_{n=0}$ given by \eqref{eq:Truncation} is an eigenvalue for which $\beta_0 \neq 0 $ and $\alpha_0 =0$ (it is crucial in this step that $\nu > 0$). Hence, for each value $E_n$, the corresponding $\phi$ has the form
\begin{equation}\label{eq:SommerfeldFormulaSeries}
\phi_{n}(\rho)\;=\;\rho^Be^{-\rho\sqrt{1-E_n^2}} \, \sum_{j=0}^{n+1}\begin{pmatrix} a_j^{(B)} \\ b_j^{(B)}\end{pmatrix}\rho^{j}\,,
\end{equation}
and through the inverse transformation $\psi=U^{-1}\phi$ of \eqref{eq:EigenvalueEq1} it is immediately recognised that $\psi$ satisfies the boundary condition \eqref{eq:Sbeta_bc} with $\beta=\infty$. This leads to the discrete spectrum of the \emph{distinguished} extension $h_D$: formula \eqref{eq:Truncation} is precisely the Sommerfeld's fine structure formula already introduced in \eqref{eq:Sommerfeld_formula}.

\medskip

\textbf{\underline{Second case}:} $\phi\in L^2(\mathbb{R}^+, \mathbb{C}^2, e^{-\rho} \, \ud \rho)$ because the $(-B)$-series in  \eqref{eq:Series1} and the $(-B)$-series in \eqref{eq:Series2} are finite polynomials, whereas the $B$-series in  \eqref{eq:Series1} and the $B$-series in \eqref{eq:Series2} vanish identically. This is obtained by imposing that $\alpha^{(-B)}_{n+1}=0$ for some $n\in\mathbb{N}_0$ and that all the $a^{(B)}_j$'s and $b^{(B)}_j$'s vanish. In this case \eqref{eq:alpha_beta} constrains $E$ to attain one of the values 
\begin{equation}\label{eq:anti-dist}
\begin{split}
 E_n&\;=\;- \displaystyle\Big(1+\frac{\nu^2}{(n-\sqrt{1-\nu^2})^2} \Big)^{-\frac{1}{2}} \qquad n\in\mathbb{N}\,,\\
 E_0&\;=\; B \,,
\end{split}
\end{equation}
the value $E_0$ being obtained by direct inspection in \eqref{eq:recursive00} analogously to what done for the analogous point in the previous case) and for each such value, the corresponding $\phi$ has the form
\begin{equation}
\phi_{n}(\rho)\;=\;  \rho^{-B}e^{-\rho\sqrt{1-E_n^2}} \, \sum_{j=0}^{n+1}\begin{pmatrix} a_j^{(-B)} \\ b_j^{(-B)}\end{pmatrix}\rho^{j}
\,.
\end{equation}
Through the inverse transformation $\psi=U^{-1}\phi$ of \eqref{eq:EigenvalueEq1} it is immediately recognised that $\psi$ satisfies the boundary condition \eqref{eq:Sbeta_bc} with
\begin{equation}\label{eq:mdbeta}
\beta\;=\; -\frac{d_\nu}{c_\nu}\,.
\end{equation}
This is another self-adjoint realisation of the Dirac-Coulomb Hamiltonian, different from $h_D$, which arises in this second case, where discussion mirrored the discussion of the first case for the distinguished extension. We shall refer to this realisation as the `\emph{mirror distinguished}' extension $h_{M\!D}$. We have thus found the discrete spectrum of $h_{M\!D}$, the eigenvalue formula \eqref{eq:anti-dist} providing the modification of Sommerfeld's formula for this Dirac-Coulomb Hamiltonian.

\medskip

It is crucial to observe at this point that the \emph{two eigenvalue formulas \eqref{eq:Truncation} and \eqref{eq:anti-dist} do not have any value in common}. As a consequence, even if combining together the truncation of the first case (in the $B$-series) and the truncation of the second case (in the $(-B)$-series) would produce a function $\phi$ that belongs to $L^2(\mathbb{R}^+, \mathbb{C}, e^{-\rho} \, \ud \rho)$, such $\phi$ could not correspond to any definite value $E$, i.e., $\phi$ could not be a solution to \eqref{eq:EigenvalueEq1}.

Truncation in \eqref{eq:Series1}-\eqref{eq:Series2} produces admissible solutions only of the form of truncated series of $B$-type or truncated series of $(-B)$-type. This explains why the only remaining case is the following.

\medskip

\textbf{\underline{Third case}:}  $\phi$ has the form \eqref{eq:Series1}-\eqref{eq:Series2} where \emph{both} component $\phi^+$ and $\phi^-$ contain two series that diverge faster than $e^{\rho/2}$ at infinity, whose sum however produces a compensation such that $\phi$ belongs to $L^2(\mathbb{R}^+, \mathbb{C}^2, e^{-\rho} \, \ud \rho)$. This yields then an admissible eigenfunction $\psi=U^{-1}\phi$ with eigenvalue $E$. Matching the coefficients of the expansion
\[
 \phi(\rho)\;=\;\rho^{-B}\begin{pmatrix} a_0^{(-B)} \\ b_0^{(-B)}  \end{pmatrix} +\rho^B\begin{pmatrix} a_0^{(B)} \\ b_0^{(B)}  \end{pmatrix} +\cdots\qquad\textrm{as } \rho \downarrow 0\,,
\]
through the transformation $\psi=U^{-1}\phi$, to the general boundary condition \eqref{eq:Sbeta_bc} indicates which domain $\mathcal{D}(h_\beta)$ the vector $\psi$ belongs to.

\medskip

Clearly, since in the third case above no truncation occurs in \eqref{eq:Series1}-\eqref{eq:Series2}, the recursive formulas for the coefficients are now of no use and it is not possible to infer from them any closed formula for the eigenvalues of the realisation $h_\beta$, $\beta\notin\{-\frac{d_\nu}{c_\nu},\infty\}$. In this sense, as announced at the beginning of this Section, the ODE methods discussed here only select the discrete spectrum (and a closed eigenvalue formula) for the distinguished extension $h_D$ and for the mirror distinguished extension $h_{M\!D}$.

To conclude this Subsection, we observe that in the sub-critical regime $\nu\in(0,\frac{\sqrt{3}}{2})$, i.e., $B\in (\frac{1}{2},1)$, the argument that led to the general form \eqref{eq:Series1}-\eqref{eq:Series2} is precisely the same, but of course in this regime $\rho^{-B}$ fails to be square-integrable near the origin, meaning that the whole $(-B)$-series in \eqref{eq:Series1}-\eqref{eq:Series2} must vanish identically. The only admissible solution is then that obtained with a truncation as in the first case, which leads again, as should be, to Sommerfeld's formula \eqref{eq:Truncation}.

\subsection{The eigenvalue problem by means of supersymmetric methods}

A second, by now classical\cite{Sukumar_1985,Grosse-1987,Cooper-Khare-Mustoa-Wipf-1988,Panahi-Bakhshi_2011}, approach to the determination of Sommerfeld's formula exploits the supersymmetric structure of the eigenvalue problem \eqref{eq:EigenvalueEq}.

By means of the bounded and invertible linear transformation $A:L^2(\mathbb{R}^+,\mathbb{C}^2)\to L^2(\mathbb{R}^+,\mathbb{C}^2)$ defined by
\begin{equation}
 A\,\xi\;:=\;
 \begin{pmatrix}
  -(1+B) & \nu \\
  \nu & -(1+B)
 \end{pmatrix}
 \begin{pmatrix}
  \xi^+ \\ \xi^-
 \end{pmatrix}
\end{equation}
it is convenient to turn the problem \eqref{eq:EigenvalueEq} into the form
\begin{equation}\label{eq:EigenvalueEq_v2}
\begin{split}
 0\;&=\;\sigma_2 \,A^{-1}\,\sigma_2\,(h_\beta-E\mathbbm{1})\,A\,\phi \\
 &=\;\left[\begin{pmatrix}
      0 & -\frac{\ud}{\ud r}+\frac{B}{r}+\frac{\nu E}{B} \\
      \frac{\ud}{\ud r}+\frac{B}{r}+\frac{\nu E}{B} & 0
     \end{pmatrix}
    -\begin{pmatrix}
      \frac{E}{B}-1 & 0 \\
      0 & \frac{E}{B}+1 
     \end{pmatrix}\right]\phi\,,
\end{split}
\end{equation} %Checked
having set
\begin{equation}\label{eq:psiAphi}
 \phi\;:=\;A^{-1}\psi\,. %;\in\;A^{-1}\mathcal{D}(h_\beta)\,.
\end{equation}

Next, in terms of the differential operators
\begin{equation}
 D^{\pm}\;:=\;\pm\frac{\ud}{\ud r}+\frac{B}{r}+\frac{\nu E}{B}
\end{equation}
acting on scalar functions, and of the differential operators
\begin{equation}
 Q\;:=\;\begin{pmatrix}
         \mathbbm{O} & D^- \\
         D^+ & \mathbbm{O}
        \end{pmatrix}\qquad\textrm{and}\qquad
 H\;:=\;Q^2\;=\;\begin{pmatrix}
         D^-D^+ & \mathbbm{O} \\
         \mathbbm{O} & D^+D^-
        \end{pmatrix}
\end{equation}
acting on spinor functions, equation \eqref{eq:EigenvalueEq_v2} reads
\begin{equation}\label{eq:ev_Qmatrixform}
 Q\phi\;=\;\begin{pmatrix}
      \frac{E}{B}-1 & 0 \\
      0 & \frac{E}{B}+1 
     \end{pmatrix}\phi\,,
\end{equation}
whence
\begin{equation}\label{eq:EigenvalueEq_v3}
 H\phi\;=\;Q^2\phi\;=\;Q\begin{pmatrix}
      \frac{E}{B}-1 & 0 \\
      0 & \frac{E}{B}+1 
     \end{pmatrix}\phi\;=
     \begin{pmatrix}
      \frac{E}{B}+1 & 0 \\
      0 & \frac{E}{B}-1 
     \end{pmatrix}Q\phi\;=\;{\textstyle(\frac{E^2}{B^2}-1)}\,\phi\,,
\end{equation}
equivalently,
\begin{equation}\label{eq:EigenvalueEq_v4}
 \begin{split}
  D^+D^-\phi^-\;&=\;{\textstyle(\frac{E^2}{B^2}-1)}\,\phi^- \\
  D^-D^+\phi^+\;&=\;{\textstyle(\frac{E^2}{B^2}-1)}\,\phi^+\,.
 \end{split}
\end{equation}

Equation \eqref{eq:EigenvalueEq_v3} or \eqref{eq:EigenvalueEq_v4} is the actual \emph{supersymmetric} form of \eqref{eq:EigenvalueEq}. The structure is indeed the same as for the triple $(\mathscr{H},\mathscr{P},\mathscr{Q})$, where (see, e.g., Ref.~\cite[Section 6.3]{Cycon-F-K-S-Schroedinger_ops} and Ref.~\cite[Section 5.1]{Thaller-Dirac-1992}), for some densely defined operator $D$ on $L^2(\mathbb{R}^+)$,
\begin{equation}
 \mathscr{Q}\;:=\;
 \begin{pmatrix}
  \mathbbm{O} & D^* \\
  D & \mathbbm{O}
 \end{pmatrix},\qquad
  \mathscr{P}\;:=\;
 \begin{pmatrix}
  \mathbbm{1} & \mathbbm{O} \\
  \mathbbm{O} & -\mathbbm{1}
 \end{pmatrix},\qquad
 \mathscr{H}\;:=\;\mathscr{Q}^2\;=\;
 \begin{pmatrix}
  D^*D & \mathbbm{O} \\
  \mathbbm{O} & DD^*
 \end{pmatrix}
\end{equation}
are self-adjoint operators on $L^2(\mathbb{R}^+)\oplus L^2(\mathbb{R}^+)\cong L^2(\mathbb{R}^+,\mathbb{C}^2)$ with the properties that $\mathscr{P}^2=\mathbbm{1}$, $\mathscr{P}\mathcal{D}(\mathscr{H})=\mathcal{D}(\mathscr{H})$, $\mathscr{P}\mathcal{D}(\mathscr{Q})=\mathcal{D}(\mathscr{Q})$, and $\{\mathscr{Q},\mathscr{P}\}=\mathbbm{O}$. Thus, $\mathscr{P}$ is an involution (the `grading operator'), $\mathscr{Q}$ is a `supercharge' with respect to such involution, and $\mathscr{H}$ is a Hamiltonian `with supersymmetry'. Moreover, standard spectral arguments show that the two spectra $\sigma(D^*D)$ and $\sigma(DD^*)$ with respect to $L^2(\mathbb{R}^+)$ lie both in $[0,+\infty)$ and coincide, and in particular the eigenvalues are the same, but for possibly the value zero.

In the present case we did not elaborate on the domain of $D^{\pm}$ when applied to $L^2(\mathbb{R}^+)$, however it is clear that the two operators are formally adjoint to each other. The fact that the eigenvalues of $D^+D^-$ and $D^-D^+$ relative to square-integrable eigenfunctions are non-negative follows from a trivial integration by parts; the fact that those such eigenvalues that are strictly positive are the same for both $D^+D^-$ and $D^-D^+$ is also an immediate algebraic consequence, for $D^-D^+f=\lambda f$ for $\lambda\neq 0$ implies that $D^+f\neq 0$ and $D^+D^-(D^+f)=\lambda(D^+f)$, the same then holding also when roles of $D^+$ and $D^-$ are exchanged.

The solutions $(E,\psi)$ to the problem \eqref{eq:EigenvalueEq}, with chosen realisation $h_\beta$, can be read out from \eqref{eq:EigenvalueEq_v3}-\eqref{eq:EigenvalueEq_v4}. Let us start with the `ground state' solutions, where `ground state' here is referred to the lowest possible eigenvalue of $H$, namely the value zero, and hence, because of \eqref{eq:EigenvalueEq_v3}, the smallest possible $|E|$ for the eigenvalue $E$ of the considered realisation $h_\beta$. First of all, the ground state energy $E_0$ must satisfy $E_0^2=B^2$, as follows from \eqref{eq:EigenvalueEq_v3}.

Out of the two possibilities, one is then to take $D^-\phi^-=0$ in \eqref{eq:EigenvalueEq_v4}, with $E=E_0$ to be determined, which is an ODE whose solutions are the multiples of
\[
 \phi^-(r)\;=\;r^{B}\,e^{\frac{\nu E_0}{B}r}\,.
\]

%nonononono... non capisco... quale deve essere l'asintotica all'infinito per phi? e manca la discussione sul segno di nu...
For such $\phi^-$ to be square-integrable, $\nu E_0<0$, thus $E_0=-B$ since $\nu>0$. Correspondingly, the second equation in \eqref{eq:EigenvalueEq_v4} is $D^-D^+\phi^+=0$ for some $\phi^+\in L^2(\mathbb{R}^+)$. This is equivalent to $D^+\phi^+=0$, thanks to the fact that $D^-$ is the formal adjoint of $D^+$. The latter ODE is solved by the multiples of $r^{-B}\,e^{-\frac{\nu E_0}{B}r}$, which is not square-integrable at infinity, whence $\phi^+=0$. Alternatively, one may argue that the corresponding $\phi^+$ to the above $\phi^-$ is read out directly from \eqref{eq:ev_Qmatrixform}: it must be (a multiple of)
\[
 ({\textstyle \frac{E_0}{B}-1})^{-1}(D^-\phi^-)(r)\;=\;({\textstyle \frac{E_0}{B}-1})^{-1}({\textstyle \frac{\ud}{\ud r}+\frac{B}{r}+\frac{\nu E_0}{B}})(r^{B}\,e^{\frac{\nu E_0}{B}r})
\]
and it must be square-integrable, which forces $\phi^+$ to be necessarily null, for the above function fails to be square-integrable at the origin.

We have thus found a solution $(E,\phi)$ to the problem \eqref{eq:EigenvalueEq_v4} with smallest possible $|E|$ and square-integrable $\phi$, namely the pair $(E_0,\phi_0)$ (up to multiples of $\phi_0$) given by
\begin{equation}\label{eq:gs_1_phi}
 E_0\;=\;-B\,,\qquad \phi_0(r)\;=\;r^{B}\,e^{\frac{\nu E_0}{B}r}
 \begin{pmatrix} 
 0 \\ 1
 \end{pmatrix}.
\end{equation}
Through \eqref{eq:Bnu} and the transformation \eqref{eq:psiAphi}, and in view of the classification \eqref{eq:defcd}, Theorem \ref{thm:extensions}(ii), we see that \eqref{eq:gs_1_phi} corresponds to the pair $(E_0,\psi_0)$ given by
\begin{equation}
 E_0\;=\;-\Big(1+\frac{\nu^2}{1-\nu^2}\Big)^{-\frac{1}{2}}\,,\qquad \psi_0(r)\;=\;r^{B}\,e^{\frac{\nu E_0}{B}r}
 \begin{pmatrix} 
 \nu \\ -(1+B)
 \end{pmatrix}\in\mathcal{D}(h_D)\,,
\end{equation}
which is the ground state solution to the initial eigenvalue problem \eqref{eq:EigenvalueEq} for $\beta=\infty$, and hence for the \emph{distinguished} self-adjoint realisation of the Dirac-Coulomb Hamiltonian.

By a completely analogous reasoning, the other possibility is to look for ground state solutions to \eqref{eq:EigenvalueEq_v4} with $D^+\phi^+=0$, and $E=E_0$ to be determined, an ODE solved by the multiples of
\[
 \phi^+(r)\;=\;r^{-B}\,e^{-\frac{\nu E_0}{B}r}\,,
\]
and such $\phi^+$ is only square-integrable if $E_0=B>0$. Correspondingly, the first equation in \eqref{eq:EigenvalueEq_v4} is $D^+D^-\phi^-=0$, equivalently, $D^-\phi^-=0$, which is solved by multiples of $r^{B}\,e^{\frac{\nu E_0}{B}r}$; the latter function failing to be square integrable at infinity, one thus ends up with the solution $(E_0,\phi_0)$ (up to multiples of $\phi_0$) given by
\begin{equation}\label{eq:gs_2_phi}
 E_0\;=\;B\,,\qquad \phi_0(r)\;=\;r^{-B}\,e^{-\frac{\nu E_0}{B}r}
 \begin{pmatrix} 
 1 \\ 0
 \end{pmatrix}.
\end{equation}
Thus, again using \eqref{eq:psiAphi}, and comparing the expansion
\[
 r^{-B}\,e^{-\frac{\nu E_0}{B}r}\;=\;r^{-B}-{\textstyle\frac{\nu E_0}{B}}\,r^{1-B}+o(r^{1-B})\qquad\textrm{as }r\downarrow 0
\]
with the general classification \eqref{eq:defcd}, whence now $g_0^+=1$, $g_1^+=0$, $c_\nu\beta+d_\nu=0$, we see that another ground state solution to \eqref{eq:EigenvalueEq} is the pair $(E_0,\psi_0)$ given by
\begin{equation}\label{eq:E0MD}
 E_0\;=\;\Big(1+\frac{\nu^2}{1-\nu^2}\Big)^{-\frac{1}{2}}\,,\qquad \psi_0(r)\;=\;r^{-B}\,e^{-\frac{\nu E_0}{B}r}
 \begin{pmatrix} 
 -(1+B) \\ \nu 
 \end{pmatrix}\in\mathcal{D}(h_{M\!D})\,,
\end{equation}
and this is the ground state solution for the \emph{mirror distinguished} ($\beta=-d_\nu/c_\nu$) self-adjoint realisation $h_{M\!D}$ already introduced in Subsection \ref{sec:ODEmethods}, formula \eqref{eq:mdbeta}. 

Significantly, no other realisations can be monitored through the supersymmetric scheme above, but those with $\beta=\infty$ or $\beta=-d_\nu/c_\nu$.

The  excited states too are determined within the supersymmetric scheme. Let
\begin{equation}\label{eq:Bplus}
 D_n^{\pm}\;:=\;\pm\frac{\ud}{\ud r}+\frac{B_n}{r}+\frac{\nu E}{B_n}\,,\qquad B_n:=B+n\,,\qquad n\in\mathbb{N}_0
\end{equation}
Clearly $B=B_0$, $D^\pm=D^\pm_0$. $D_n^+$ and $D_n^-$ are formally adjoint. From
\[
 \begin{split}
  D^\pm_n D^\mp_n\;&=\;-\frac{\ud^2}{\ud r^2}+\frac{B_n(B_n\mp 1)}{r^2}+\frac{2\nu E}{r}+\frac{\nu^2 E^2}{B_n^2}
 \end{split}
\]
one deduces
\begin{equation}\label{eq:Bshift}
 D^\pm_n D^\mp_n f\;=\;\big({\textstyle E^2(1+\frac{\nu^2}{B_n^2}})-1\big)f\qquad\Leftrightarrow\qquad-f''+{\textstyle \frac{B_n(B_n\mp 1)}{r^2}}f+{\textstyle\frac{2\nu E}{r}}f-E^2f\;=\;0\,.
\end{equation}
Thus, the equation in \eqref{eq:Bshift} with the lower signs is the same as the equation with the upper signs and with $B_n$ replaced by $B_{n+1}$. This is the basis for an iterative argument, as follows.

As a first step, as a consequence of \eqref{eq:Bshift}, the equation $D^-D^+\phi^+=(\frac{E^2}{B^2}-1)\phi^+$ of the problem \eqref{eq:EigenvalueEq_v3} is equivalent to $D_1^+D_1^-\phi^+=(E^2(1+\frac{\nu^2}{(B+1)^2})-1)\phi^+$, which can be regarded as the first scalar equation of
\begin{equation}\label{eq:susyproblem_step1}
\begin{pmatrix}
         D_1^-D_1^+ & \mathbbm{O} \\
         \mathbbm{O} & D_1^+D_1^-
        \end{pmatrix}
\begin{pmatrix}
 \xi_1^+ \\ \xi_1^-
\end{pmatrix}\;=\;\big({\textstyle E^2(1+\frac{\nu^2}{(B+1)^2})-1}\big)
\begin{pmatrix}
 \xi_1^+ \\ \xi_1^-
\end{pmatrix},\qquad \xi_1^-:=\phi^+\,.
\end{equation}
The ground state solution $(E_1,\xi_1^{(\textrm{gs})})$ to the new supersymmetric problem \eqref{eq:susyproblem_step1} is obtained in complete analogy to the argument that led to \eqref{eq:gs_1_phi}, whence
\begin{equation}\label{eq:Eqxi1}
 E_1\;=\;-\big({\textstyle 1+\frac{\nu^2}{(B+1)^2}} \big)^{\!-\frac{1}{2}}\,,\qquad \xi_1^{(\textrm{gs})}(r)\;=\;r^{B+1}e^{\frac{\nu E_1}{B+1}r}
 \begin{pmatrix} 0 \\ 1 \end{pmatrix}.
\end{equation}
(The other solution that one would find in complete analogy to the argument that led to \eqref{eq:gs_2_phi} is not square integrable.)
In turn, using $\phi^+=\xi_1^-$, \eqref{eq:Eqxi1} corresponds to a solution $\phi_1^+$ to the equation $D^-D^+\phi^+=(\frac{E_1^2}{B^2}-1)\phi^+$, and hence to a solution $(E_1,\phi_1)$ to the original problem  \eqref{eq:ev_Qmatrixform}-\eqref{eq:EigenvalueEq_v3},
given by
\begin{equation}
 \begin{split}
  E_1\;&=\;-\big({\textstyle 1+\frac{\nu^2}{(B+1)^2}} \big)^{\!-\frac{1}{2}}\;<\;E_0\;<\;0 \\
  \phi_1^+(r)\;&=\;r^{B+1}e^{\frac{\nu E_1}{B+1}r} \\
  \phi_1^-(r)\;&=\;({\textstyle \frac{E_1}{B}+1})^{-1}(D^+\phi_1^+)(r)\,.
 \end{split}
\end{equation}
Clearly $(D^+\phi_1^+)(r)\sim r^B$ as $r\downarrow 0$, and all together $\psi_1:=A\phi_1\in \mathcal{D}(h_D)$: thus, $(E_1,\psi_1)$ gives the first excited state for the eigenvalue problem \eqref{eq:EigenvalueEq} for the \emph{distinguished} realisation $h_D$.

The procedure is repeated for the iterated supersymmetric problems
\begin{equation}\label{eq:susyproblem_stepn}
\begin{split}
\begin{pmatrix}
         \mathbbm{O} & D_{n-1}^-   \\
         D_{n+1}^+ & \mathbbm{O}
        \end{pmatrix}\begin{pmatrix}
 \xi_{n-1}^+ \\ \xi_{n-1}^-
\end{pmatrix} \;&=\;\begin{pmatrix}
        E\sqrt{1+\frac{\nu^2}{B^2_{n-1}}}-1  & \mathbbm{O}   \\
        \mathbbm{O} & E\sqrt{1+\frac{\nu^2}{B^2_{n-1}}}+1
        \end{pmatrix}\begin{pmatrix}
 \xi_{n-1}^+ \\ \xi_{n-1}^-
\end{pmatrix}, \\
\begin{pmatrix}
         D_n^-D_n^+ & \mathbbm{O} \\
         \mathbbm{O} & D_n^+D_n^-
        \end{pmatrix}
\begin{pmatrix}
 \xi_n^+ \\ \xi_n^-
\end{pmatrix}\;&=\;\big({\textstyle E^2(1+\frac{\nu^2}{B_n^2})-1}\big)
\begin{pmatrix}
 \xi_n^+ \\ \xi_n^-
\end{pmatrix},\qquad \xi_n^-=\xi_{n-1}^+\,.
\end{split}
\end{equation}
The admissible ground state solution $(E_n,\xi_n^{(\textrm{gs})})$ for the second equation in \eqref{eq:susyproblem_stepn} is
\begin{equation}
 E_n\;=\;-\big({\textstyle 1+\frac{\nu^2}{(B+n)^2}} \big)^{\!-\frac{1}{2}}\,,\qquad \xi_n^{(\textrm{gs})}(r)\;=\;r^{B+n}e^{\frac{\nu E_n}{B+n}r}
 \begin{pmatrix} 0 \\ 1 \end{pmatrix};
\end{equation}
then, by the first equation in \eqref{eq:susyproblem_stepn} and the preceding iterations, the pair $(E_n,\phi_n)$ with
\begin{equation}
\phi_n\;:=\;\begin{pmatrix}
  D_{n-1}^+ \big(r^{B+n}e^{\frac{\nu E_n}{B+n}r}\big)\\
  D_0^+D_1^+\cdots D_{n-1}^+\big(r^{B+n}e^{\frac{\nu E_n}{B+n}r}\big)
\end{pmatrix}
\end{equation}
gives the $n$-th excited state solution to the original problem  \eqref{eq:ev_Qmatrixform}-\eqref{eq:EigenvalueEq_v3}. One immediately recognises that $\phi_n^\pm(r)\sim r^B$ as $r\downarrow 0$, whence $\psi_n:=A\phi_n\in \mathcal{D}(h_D)$: thus, $(E_n,\psi_n)$ gives the $n$-th excited state for the eigenvalue problem \eqref{eq:EigenvalueEq} for the \emph{distinguished} realisation $h_D$.

With the analysis above one reproduces all energy levels of Sommerfeld's formula
\begin{equation}
 E_n\;=\;-\big({\textstyle 1+\frac{\nu^2}{(n+\sqrt{1-\nu^2})^2}} \big)^{\!-\frac{1}{2}}\,,\qquad n\in\mathbb{N}_0
\end{equation}
and recognises that they all correspond to bound states for the distinguished realisation $h_D$ of the Dirac-Coulomb Hamiltonian.

By a completely symmetric iterative analysis which starts using
\begin{equation}
B_n:=B-n, \qquad n \in \mathbb{N}_0
\end{equation} 
instead of \eqref{eq:Bplus} and the same definitions for $D^\pm_n$ one sees that also the pairs $(E_n,\psi_n)$, with $\psi_n:=A\phi_n$ and
\begin{equation}\label{eq:as}
E_n\;:=\;-\big({\textstyle 1+\frac{\nu^2}{(n+\sqrt{1-\nu^2})^2}} \big)^{\!-\frac{1}{2}}\,,\qquad\phi_n\;:=\;\begin{pmatrix}
  D_0^+D_1^+\cdots D_{n-1}^+\big(r^{-B+n}e^{\frac{\nu E_n}{-B+n}r}\big)\\
  D_{n-1}^+ \big(r^{-B+n}e^{\frac{\nu E_n}{-B+n}r}\big)
\end{pmatrix},
\end{equation}
provide a complete set of solutions to the eigenvalue problem \eqref{eq:EigenvalueEq} for the \emph{mirror distinguished} realisation $h_{M\!D}$ ($\psi_n\in\mathcal{D}(h_\beta)$ for $\beta=-d_\nu/c_\nu$).

%\bigskip
%\bigskip
%
%PER MATTEO: Quest'ultimo capoverso l'ho aggiunto `by heart' avendo in mente le tue parole sulla anti-Sommerfeld formula e anti-Sommerfeld extension (che qui sto chiamando mirror distinguished, almeno temporaneamente), ma non sono sicuro che corrisponda mimiamente al vero. Sono certo che e' corretto concludere che \eqref{eq:E0MD} da' l'autovalore di ground state dell'estensione con $\beta=-d_\nu/c_\nu$, ma non so se \eqref{eq:as} ne e' davvero la generalizzazione a stati eccitati. In \eqref{eq:as} gli autovalori si accumulano a +1, e' qualcosa che e' confermato dal grafico numerico? Nella tua sezione dici che $E_n$ col + e' per i $\nu$ dell'altro segno, non parli di formula `anti-Sommerfeld' per un'altra estensione. Insomma, ragionaci su e fissa (o stralcia) questo punto.
%
%

% \begin{itemize}
%  \item Application of the ODE general theorem; use of the analiticity (double series); a continuum of $E$-eigenvalues and Tricomi's; Ansatz $b_n\equiv 0$ and truncation on the $a_n$'s yielding the the eigenvalue formula for the distinguished (or for the unique self-adjoint realisation in the sub-critical regime); Ansatz $a_n\equiv 0$ and truncation on the $b_n$'s yielding the anti-distinguished. References.
%  \item Alternative approach with super-symmetry. Again, distinguished + anti-distinguished.
% \end{itemize}

\section{Discrete spectrum of the generic extension}\label{sec:beta-spectrum}

In this Section we prove Theorem \ref{thm:spectrum-beta} and Corollary \ref{cor:eigenvalues_distinguished}.

For Theorem \ref{thm:spectrum-beta} we study the eigenvalue problem for $h_\beta$ in the form of the differential equation \eqref{eq:EigenvalueEq1}-\eqref{eq:EigenvalueEq2} already identified in Subsection \ref{sec:ODEmethods}. The key point is the intimate relation between the differential operator \eqref{eq:EigenvalueEq2} and the confluent hypergeometric equation. Exploiting such a relation yields, in the operator-theoretic language of Theorem \ref{thm:extensions}, the explicit expression for the eigenfunctions of the adjoint $h^*$ of $h$. Imposing further that such eigenfunctions satisfy the typical boundary condition for the $h_\beta$-extension brings eventually to the implicit eigenvalue formula \eqref{eq:fEn_formula}.

\begin{proof}[Proof of Theorem \ref{thm:spectrum-beta}]
Let us start from the differential problem \eqref{eq:EigenvalueEq}, re-written in the form \eqref{eq:EigenvalueEq1}-\eqref{eq:EigenvalueEq2}.%, now carrying on a unified notation for the self-adjoint extensions $h_\beta$ of the operator $h$ defined in \eqref{eq:def_h} when $k=1$ and in \eqref{eq:def_h_with_k_minus1} when $k=-1$. 

For a solution $\phi$ to \eqref{eq:EigenvalueEq1} with given $E\in(-1,1)$ we introduce, in analogy to \eqref{eq:change}, the two scalar functions $u_1$ and $u_2$ such that 
\begin{equation}\label{eq:def_u1u2}
 \begin{split}
  \phi^+&\;=\;\sqrt{1+E} \, (u_1+u_2) \\
  \phi^-&\;=\; \sqrt{1-E} \, (u_1-u_2)\,.
 \end{split}
\end{equation}
Plugging \eqref{eq:def_u1u2} into \eqref{eq:EigenvalueEq1}-\eqref{eq:EigenvalueEq2} yields
\begin{equation}\label{eq:u1u2eq}
 \begin{split}
  \textstyle u_2'+\Big(\frac{k}{\rho}+\frac{\nu}{\rho\,\sqrt{1-E^2}\,} \Big) u_1 +\frac{\nu E }{\rho\,\sqrt{1-E^2}\,} u_2 \;&=\; 0 \\
   \textstyle -u_1'+\Big(1+\frac{\nu E }{\rho\,\sqrt{1-E^2}\,} \Big) u_1+ \Big(\frac{\nu}{\rho\,\sqrt{1-E^2}\,}-\frac{k}{\rho} \Big) u_2 \;&=\;0\,,
 \end{split}
\end{equation}
and solving for $u_1$ in the first equation above and plugging it into the second equation gives a second order differential equation for $u_2$ which, re-written for the scalar function $v:=\rho^B u_2$, takes the form 
\begin{equation}\label{eq:hypergeom-v}
 \rho \,v''+(1-2B-\rho) \,v'-\Big(\frac{\nu E}{\sqrt{1-E^2}}-B \Big) v \; = \; 0\,.
\end{equation}

Equation \eqref{eq:hypergeom-v} is a confluent hypergeometric equation -- we refer, e.g., to Ref.~\cite[Chapter 13]{Abramowitz-Stegun-1964} for its definition and for the properties that we are going to use here below. Out of the two linearly independent solutions to \eqref{eq:hypergeom-v}, the Kummer function $M_{a,b}(\rho)$ and the Tricomi function $U_{a,b}(\rho)$ with parameters 
\begin{equation}\label{eq:parameters_ab}
 \textstyle a\;=\;\frac{\nu E}{\sqrt{1-E^2}}-B\,,\qquad b\;=\;1-2B\,,
\end{equation}
only the latter belongs to $L^2(\mathbb{R}^+,\mathbb{C}, e^{-\rho}\ud\rho)$, for
\[
 \begin{array}{l}
  M_{a,b}(\rho)\;=\;e^r\,\frac{\,r^{a-b}}{\Gamma(a)}(1+O(r^{-1})) \\
  \,\,U_{a,b}(\rho)\;=\;r^{-a}(1+O(r^{-1}))
 \end{array}\qquad \textrm{as } r\to +\infty\,.
\]
With $u_2=\rho^{-B} v=\rho^{-B} U_{a,b}(\rho)$, and with $u_1$ determined by \eqref{eq:u1u2eq} and the property
\[
 U'_{a,b}(\rho)\;=\;-a\, U_{a+1,b+1}(\rho)\,,
\]
we reconstruct the solution $\phi$ by means of \eqref{eq:def_u1u2} and we find
\begin{equation}
 \phi^{\pm}(\rho)\;=\;\frac{\rho^{-B}}{\,k+\frac{\nu}{\sqrt{1-E^2}}} \textstyle \big( \big(B\pm\nu\sqrt{\frac{1-E}{1+E}}\pm k \big) \,U_{a,b}(\rho)+a\, \rho\, U_{a+1,b+1}(\rho) \big)\,.
\end{equation}

Correspondingly, the solution $\psi=U^{-1}\phi$ to the differential problem $\widetilde{h}\psi=E\psi$, where $U:L^2(\mathbb{R}^+,\mathbb{C}^2,\ud r)\to L^2(\mathbb{R}^+,\mathbb{C}^2,e^{-\rho}\ud \rho)$ is the unitary map \eqref{eq:rescaling_map_U}, takes the form
\begin{equation}\label{eq:psi_eigenf}
\begin{split}
 \psi^{\pm}(r)\;&=\;\frac{(2r\sqrt{1-E^2})^{-B}\,e^{-r\sqrt{1-E^2}}}{\,k+\frac{\nu}{\sqrt{1-E^2}}} \Big( \textstyle \sqrt{1\pm E}\,\big(B\pm\nu\sqrt{\frac{1-E}{1+E}}\pm k \big) \,U_{a,b}(2r\sqrt{1-E^2}) \\
 &\qquad\qquad\qquad\qquad\qquad\quad+2ar\sqrt{1-E^2}\, U_{a+1,b+1}(2r\sqrt{1-E^2}) \Big)\,.
\end{split}
\end{equation}
From the above expression we deduce the asymptotics
\begin{equation}\label{eq:psiplus}
\begin{split}
\!\!\!\!\!\!\!\psi^+(r) \;&=\; \textstyle\frac{\Gamma(1-b)}{\Gamma(1+a-b)} \big(B+\nu\sqrt{\frac{1-E}{1+E}}  + k \big) r^{-B} + \frac{\Gamma(b-1)}{\Gamma(a)} (2\sqrt{1-E^2})^{2B} \big(\nu \sqrt{\frac{1-E}{1+E}} + k - B \big) r^B  \\
&\qquad + o(r^{1/2})\qquad\textrm{as }r\downarrow 0\,.
\end{split}
\end{equation}

Since $\widetilde{h}\psi=E\psi\in L^2(\mathbb{R}^+,\mathbb{C}^2,\ud r)$, then $\psi\in\mathcal{D}(h^*)$. Therefore, comparing \eqref{eq:parameters_ab} and \eqref{eq:psiplus} above with the general formulas \eqref{eq:coeff_a_b_BIS}-\eqref{eq:coeff_a_b} of Theorem \ref{thm:extensions}, we read out the coefficients
\begin{equation}\label{eq:g0g1_final}
 \begin{split}
  g_0^+ \;&=\; \textstyle \frac{\Gamma(2 B)}{\Gamma(\frac{\nu E}{\sqrt{1-E^2}}+B)} \big(\nu\sqrt{\frac{1-E}{1+E}} + k+B \big) \\
g_1^+ \;&=\; \textstyle(2 \sqrt{1-E^2})^{2B}\,\frac{\Gamma(-2 B)}{\Gamma(\frac{\nu E }{\sqrt{1-E^2}} - B)} \, \big(\nu \sqrt{\frac{1-E}{1+E}}  + k - B \big)
 \end{split}
\end{equation}
of the small-$r$ expansion $\psi(r)=g_0 r^{-B}+g_1 r^B+o(r^{1/2})$.

We are now in the condition to apply our classification formula  \eqref{eq:Sbeta_bc} to such $\psi$. Upon setting
  \begin{equation}\label{eq:setting_of_F}
  \mathfrak{F}_{\nu,k}(E)\;:=\;\frac{g_1^+}{g_0^+}\;=\; (2 \sqrt{1-E^2})^{2B}\;\frac{\Gamma(-2B)}{\Gamma(2B)}\;\frac{\Gamma(\frac{\nu E }{\sqrt{1-E^2}}+B)}{\Gamma(\frac{\nu E}{\sqrt{1-E^2}}-B)}\;\frac{\nu\sqrt{\frac{1-E}{1+E}}+k-B}{\nu\sqrt{\frac{1-E}{1+E}}+k+B}
 \end{equation}
we deduce from \eqref{eq:g0g1_final} and \eqref{eq:Sbeta_bc} that the function $\psi\in\mathcal{D}(h^*)$ determined so far actually belongs to $\mathcal{D}(h_\beta)$, and therefore is a solution to $h_\beta\psi=E\psi$, if and only if $E$ satisfies
\begin{equation}\label{eq:fE_beta_equation}
 \mathfrak{F}_{\nu,k}(E)\;=\;c_{\nu,k}\, \beta + d_{\nu,k}\,,
\end{equation}
which then proves \eqref{eq:fEn_formula}.

It is straightforward to deduce from the properties of the $\Gamma$-function that the map $(-1,1)\ni E\mapsto \mathfrak{F}_{\nu,k}(E)$ has the following features. $\mathfrak{F}_{\nu,k}$ has vertical asymptotes corresponding to the roots of 
\begin{equation}\label{eq:roots}
 \frac{{ \textstyle\Gamma\big(\frac{\nu E }{\sqrt{1-E^2}}+B}\big)}{{ \textstyle\Gamma\big(\frac{\nu E }{\sqrt{1-E^2}}-B}\big)}\times\frac{{\textstyle \nu\sqrt{\frac{1-E}{1+E}}+k-B}}{{\textstyle{\nu\sqrt{\frac{1-E}{1+E}}+k+B}}}\;=\;\infty\,.
\end{equation}
As we shall determine in detail working out equation \eqref{eq:roots} in the proof of Corollary \ref{cor:eigenvalues_distinguished}, such roots are indeed countably many and the corresponding asymptotes are located at the points $E=E_n$, with $E_n$ given by formula \eqref{eq:EVEnk1}. Therefore the asymptotes accumulate at $E=-1$ for $\nu>0$ and at $E=1$ for $\nu<0$. When $\nu>0$, in each interval $(E_{n+1},E_{n})$, as well as in the interval $(E_{n_0},1)$, $\mathfrak{F}_{\nu,k}$ is smooth and strictly monotone decreasing; the value $\mathfrak{F}_\nu(1)$ is finite and negative. When $\nu<0$ one has conversely that in each interval $(E_{n},E_{n+1})$, as well as in the interval $(-1,E_{n_0})$, $\mathfrak{F}_{\nu,k}$ is smooth and strictly monotone increasing.

Thus, the range of $\mathfrak{F}_{\nu,k}$ is the whole real line, which makes the equation \eqref{eq:fE_beta_equation} always solvable for any $\beta$, again with a countable collection of roots. This completes the proof.
\end{proof}

\begin{figure}[h!]
\begin{center}
\includegraphics[scale=0.35]{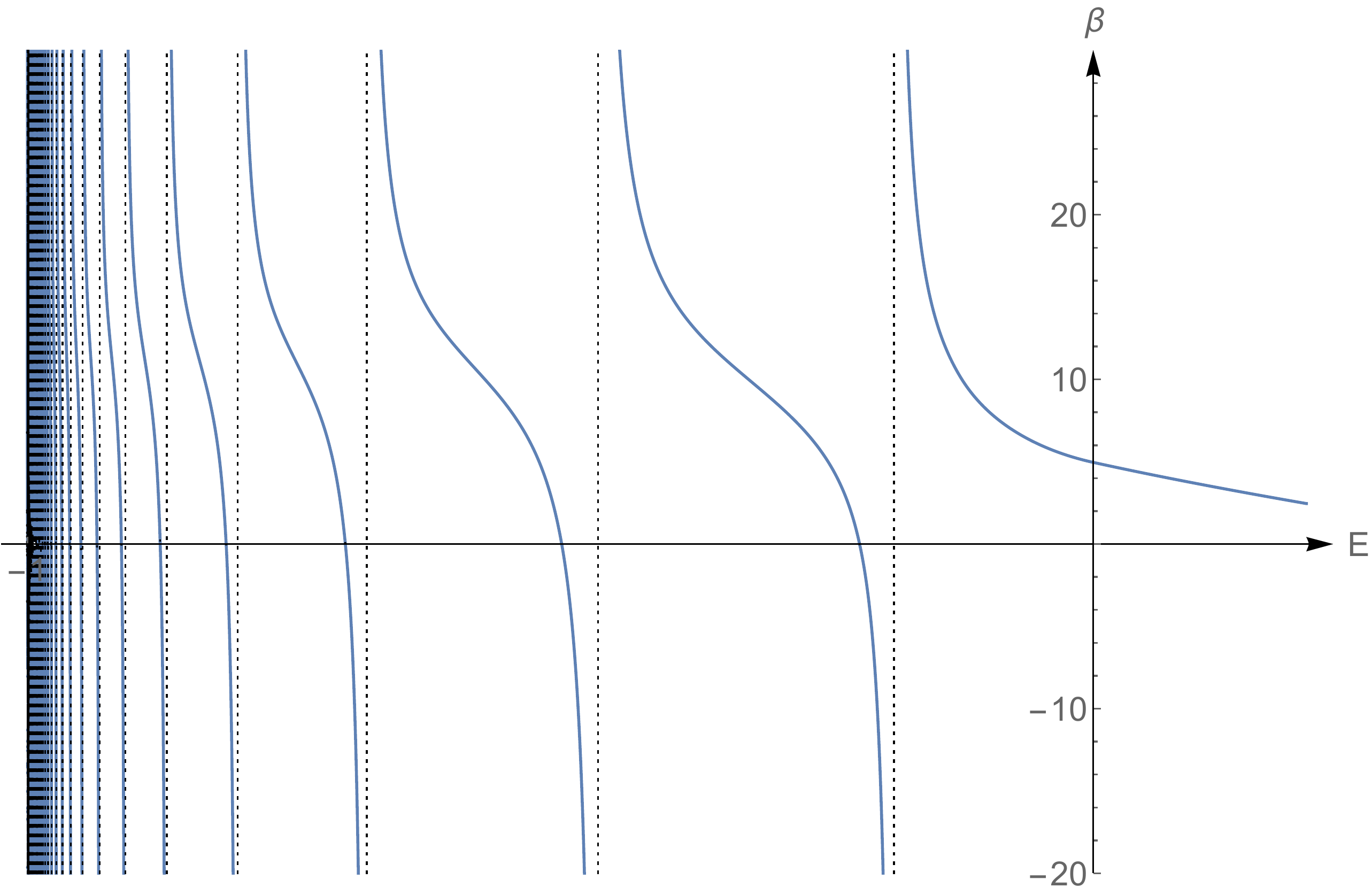}
\end{center}
\caption{Plot of $\mathfrak{F}_{\nu,k}(E)$ for $k=1$ and $\nu=0.9$ for $E \in (-1,0.3)$.}\label{fig:mathfrakF}
\end{figure}

The behaviour of $E\mapsto \mathfrak{F}_{\nu,k}(E)$ discussed above is illustrated in Figure \ref{fig:mathfrakF}  for $k=1$ and $\nu>0$. Observe that in this case the points $E_n$ where the vertical asymptotes are located at are all negative and $E_n\to -1$ as $n\to +\infty$. For
$\beta\in(-\infty,\mathfrak{F}_{\nu,k}(1))\cup(d_{\nu,k},+\infty)$ all such roots are strictly negative, whereas for $\beta\in (\mathfrak{F}_{\nu,k}(1),d_{\nu,k})$ the lowest root (and only that one) is strictly positive. As to be expected, $\mathfrak{F}_{\nu,k}(0)=d_{\nu,k}$, as one can easily see by comparing the value $\mathfrak{F}_{\nu,k}(0)$ obtained from \eqref{eq:setting_of_F} with the quantity $d_\nu$ given by \eqref{eq:def_qpm}/\eqref{eq:defcd}.

Let us now move to the derivation of Sommerfeld's formula from our general eigenvalue equation.

\begin{proof}[Proof of Corollary \ref{cor:eigenvalues_distinguished}]
 The goal is to determine the roots of $\mathfrak{F}_{\nu,k}(E)=\infty$, equivalently, the roots of  equation \eqref{eq:roots}. For each of the four factors
% \[
%  \begin{split}
%   P_\nu(E)\;&:=\; \textstyle\Gamma\big(\frac{\nu E }{\sqrt{1-E^2}}+B\big) \\
%   Q_{\nu,k}(E)\;&:=\; \textstyle \nu\sqrt{\frac{1-E}{1+E}}+k-B \\
%   R_{\nu,k}(E)\;&:=\;\textstyle\nu\sqrt{\frac{1-E}{1+E}}+k+B \\
%   S_{\nu}(E)\;&:= \;\textstyle\Gamma\big(\frac{\nu E }{\sqrt{1-E^2}}-B\big)
%  \end{split}
% \]
 \[
  \begin{split}
   P_\nu(E)\;&:=\; \textstyle\Gamma\big(\frac{\nu E }{\sqrt{1-E^2}}+B\big) \\
   Q_{\nu,k}(E)\;&:=\; \textstyle \nu\sqrt{\frac{1-E}{1+E}}+k-B \\
   R_{\nu,k}(E)\;&:=\;\textstyle\nu\sqrt{\frac{1-E}{1+E}}+k+B \\
   S_{\nu}(E)\;&:= \;\textstyle\Gamma\big(\frac{\nu E }{\sqrt{1-E^2}}-B\big)
  \end{split}
 \]

 in the l.h.s.~of \eqref{eq:roots} it is straightforward to find the following.
 \begin{itemize}
  \item $P_\nu(E)=\infty$ for $\frac{\nu E }{\sqrt{1-E^2}}+B=-n$, $n\in\mathbb{N}_0$, and hence for $E=-\mathrm{sign}(\nu)\,\mathcal{E}_n$ with 
  \begin{equation}\label{eq:ProvaCorollario}
   \mathcal{E}_n\;:=\;\Big({ 1+\frac{\nu^2}{(n+\sqrt{1-\nu^2})^2} \Big)^{\!-\frac{1}{2}}}\,.
  \end{equation}
   \item $Q_{\nu,k}(E)=0$ for 
  \[
   \begin{array}{ll}
    E=-B & \quad\textrm{if $k=-1$, and $\nu>0$} \\
    E=B  & \quad\textrm{if $k=1$ and $\nu<0$} \\
    \textrm{no value of $E$} & \quad \textrm{otherwise\,.}
   \end{array}
  \]
  \item $R_{\nu,k}(E)=0$ for 
  \[
   \begin{array}{ll}
    E=-B & \quad\textrm{if $k=1$ and $\nu<0$} \\
    E=B & \quad\textrm{if $k=-1$, and $\nu>0$} \\
    \textrm{no value of $E$} & \quad \textrm{otherwise\,.}
   \end{array}
  \]
    \item $S_\nu(E)=\infty$ for $\frac{\nu E }{\sqrt{1-E^2}}-B=-n$, $n\in\mathbb{N}_0$, and hence for $E=\mathrm{sign}(\nu)\,\mathcal{E}_{-n}$ with $\mathcal{E}_n$ defined in \eqref{eq:ProvaCorollario}.
 \end{itemize}
 
% in the l.h.s.~of \eqref{eq:roots} it is straightforward to find the following.
% \begin{itemize}
%  \item $P_\nu(E)=\infty$ for $\frac{\nu E }{\sqrt{1-E^2}}+B=-n$, $n\in\mathbb{N}_0$, and hence for $E=-\mathrm{sign}(\nu)\,\mathcal{E}_n$ with
%  \[
%   \mathcal{E}_n\;:=\;\Big({ 1+\frac{1}{n+\sqrt{1-\nu^2}} \Big)^{\!-\frac{1}{2}}}\,.
%  \]
%   \item $Q_{\nu,k}(E)=0$ for 
%  \[
%   \begin{array}{ll}
%    E=-B & \textrm{if $k=1$, irrespectively of the sign of $\nu$} \\
%    E=-B  & \textrm{if $k=-1$ and $\nu>0$} \\
%    \textrm{no value of $E$} & \textrm{if $k=-1$ and $\nu<0$\,.}
%   \end{array}
%  \]
%  \item $R_{\nu,k}(E)=0$ for 
%  \[
%   \begin{array}{ll}
%    \textrm{no value of $E$}  & \textrm{if $k=1$ and $\nu>0$} \\
%    E=-B & \textrm{if $k=1$ and $\nu<0$} \\
%    E=B & \textrm{if $k=-1$, irrespectively of the sign of $\nu$\,.}
%   \end{array}
%  \]
%    \item $S_\nu(E)=\infty$ for $\frac{\nu E }{\sqrt{1-E^2}}-B=-n$, $n\in\mathbb{N}_0$, and hence for $E=\mathrm{sign}(\nu)\,\mathcal{E}_{-n}$ with $\mathcal{E}_n$ defined as above.
% \end{itemize}
 
 Therefore, for the problem $\mathfrak{F}_{\nu,k}(E)=\infty$, which is equivalent to 
 \begin{equation*}%\label{eq:Znu}
  Z_{\nu,k}(E)\;:=\;\frac{P_\nu(E)}{S_\nu(E)} \,\frac{Q_{\nu,k}(E)}{R_{\nu,k}(E)}\;=\;\infty\,,
 \end{equation*}
  we can distinguish the following cases. 
  
For all $k$ and $\nu$, then $Z_{\nu,k}(E)=\infty$ at least for $E=-\mathrm{sign}(\nu) \mathcal{E}_n$ with $n\geqslant 1$ (which makes $P_\nu$ diverge, keeping $Q_{\nu,k}$, $R_{\nu,k}$, and $S_\nu$ finite); the remaining possibilities $E=\pm B$ have to be discussed separately. 

If $k$ and $\nu$ have the same sign, then $\lim_{E \to \pm B} Z_{\nu,k}(E)$ is either zero or infinity because only one among $P_\nu$ and $S_\nu$ diverges, $Q_{\nu,k}$ and $R_{\nu,k}$ remaining finite. Explicitly,
\[
\begin{split}
	\lim_{E \to \mp B} Z_{\nu,k}(E)\;&=\;\infty \qquad \textrm{if } \nu \gtrless 0 \\
	\lim_{E \to \pm B} Z_{\nu,k}(E)\;&=\;0 \qquad\;\: \textrm{if } \nu \gtrless 0\,.
\end{split}
\]
Thus, the value $E=-\mathrm{sgn}(\nu)B$ is admissible and $E=\mathrm{sgn}(\nu)B$ is to be discarded. This proves formula \eqref{eq:EVEnk1} for the case $k$ and $\nu$ with the same sign.

If instead $k$ and $\nu$ have opposite sign, then $\lim_{E\to \pm B} Z_{\nu,k}(E)$ must be either determined resolving the indeterminate $P_\nu\cdot Q_{\nu,k}=\infty\cdot 0$ ($R_{\nu,k}$ and $S_\nu$ being finite) or resolving the indeterminate form $S_\nu\cdot R_{\nu,k}=\infty\cdot 0$ ($P_\nu$ and $Q_{\nu,k}$ being finite). Owing to the asymptotics $\Gamma(x) \sim x^{-1}$ as $x \to 0$ all these limits are finite and non-zero, which makes the values $\pm B$ not admissible. This discussion proves formula \eqref{eq:EVEnk1} for the case in which $k$ and $\nu$ have opposite sign.
\end{proof}

% \begin{rem}
% By \cite[Eq.~13.1.3]{Abramowitz-Stegun-1964} the condition for the eigenfunction $\psi_E$ to be a polynomial in $r$ multiplied by a decreasing exponential is $a=n \in \mathbb{N}$. This condition is precisely \eqref{eq:SommerfeldFormulaSeries}. 
% \end{rem}

% \begin{figure}[h]
% \includegraphics[scale=0.35]{g1-over-g0-minus-d-2.pdf}
% \caption{Graph of the function $f(E)-d_\nu$ for $\nu=0.9$ and $\kappa=1$.}
% \end{figure}
%
%QUI GRAFICO DI GAMMA, QUELLO RITOCCATO A MANO PER ESSERE PIU CHIARO
%
%\begin{itemize}
% \item ODE approach, hypergeometryc, asymptotics
% \item general formulas for $g_1$ and $g_0$
% \item classification formula
% \item numerical plot
% \item comment on the EV's of the distinguished
% \item comment on the non-invertible extension and on the gap
% \item anti-distinguished
% \item maybe monotonicity
% \item etc. etc.
% 
%\end{itemize}
%\begin{acknowledgments}
%Acknowledgments here
%%This work was partially supported by the 2014-2017 MIUR-FIR grant ``\emph{Cond-Math: Condensed Matter and Mathematical Physics}'' code RBFR13WAET. We warmly thank ***
%\end{acknowledgments}
%
%\appendix
%
%\section{Appendix}
%\section{One more appendix}

%\bibliographystyle{aipauth} this fucking style must be commented!!!
%\bibliography{bib_ALE}

%merlin.mbs aipauth4-1.bst 2010-07-25 4.21a (PWD, AO, DPC) hacked
%Control: key (0)
%Control: author (9) reversed initials
%Control: editor formatted (0) differently from author
%Control: production of article title (0) allowed
%Control: page (1) range
%Control: year (1) truncated
%Control: production of eprint (0) enabled
\def\cprime{$'$}

\end{document}